\newenvironment{emphasized}{%
    \begin{quoting}[font={itshape}]
}{%
    \end{quoting}
} 
\newcommand{\eqdef}{\stackrel{\text{\tiny def}}{=}} 
\newcommand{\from}{\colon} 
\DeclarePairedDelimiter{\xpar}{\lparen}{\rparen} 
\providecommand{\given}{}
\DeclarePairedDelimiterX{\set}[1]{\{}{\}}{%
  \renewcommand{\given}{\nonscript\;\delimsize\vert\nonscript\;\mathopen{}} #1%
} 
\DeclareMathOperator{\interior}{int} 
\newcommand{\one}{\mathbf{1}} 
\DeclareMathOperator{\supp}{supp} 
\providecommand{\given}{}
\DeclarePairedDelimiterX{\meanbrackets}[1]{\lbrack}{\rbrack}{%
  \renewcommand{\given}{\nonscript\,\delimsize\vert\nonscript\,\mathopen{}} #1%
}
\DeclareMathOperator*{\meansymbol}{\mathbb{E}}
\NewDocumentCommand{\mean}{e{_}}{%
  \meansymbol%
  \IfNoValueF{#1}{\sb{#1}}%
  \meanbrackets%
} 
\newcommand{\param}[1]{%
  \texorpdfstring{\ensuremath{#1}\protect\nobreakdash}{#1}%
} 
\newcommand{\gradient}{\nabla} 
\DeclarePairedDelimiter{\abs}{\lvert}{\rvert} 
\DeclareMathOperator{\tangent}{\mathcal{T}} 
\DeclarePairedDelimiterX{\inner}[2]{\langle}{\rangle}{#1,\,#2} 
\DeclarePairedDelimiter{\norm}{\lVert}{\rVert} 
\DeclareMathOperator{\nash}{NE} 
\newcommand{\naturals}{\mathbb{N}} 
\newcommand{\reals}{\mathbb{R}}    
\let\nnreals\nonnegativereals      
\newcommand{\condiffs}{%
  \mathcal{C}_{1}}                 
\let\polys\polynomials             
\let\sospolys\sumofsquares         
\newcommand{\sos}{\sigma}          
\newcommand{\normal}{\mathcal{N}}  
\newcommand{\game}{\mathscr{G}}    
\newcommand{\noplayers}{n}         
\newcommand{\pures}{\mathcal{A}}   
\newcommand{\nopures}{m}           
\newcommand{\pure}{s}              
\newcommand{\payoff}{u}            
\newcommand{\instantpayoff}{v}     
\newcommand{\payoffmat}{%
  \mathbf{A}}                      
\newcommand{\actions}{\mathcal{X}} 
\newcommand{\action}{x}            
\newcommand{\policy}{f}            
\newcommand{\uncoupleds}{%
  \mathcal{F}}                     
\newcommand{\uncoupled}{%
  \policy^{\mathrm{un}}}           
\newcommand{\horizon}{\reals}      
\newcommand{\initialaction}{%
  \action_{0}}                     
\newcommand{\flow}{\phi}           
\newcommand{\explorerate}{T}       
\newcommand{\nodata}{K}            
\newcommand{\semiset}{\mathcal{S}} 
\newcommand{\ngcon}{g}             
\newcommand{\ngcons}{\mathcal{G}}  
\newcommand{\eqcon}{h}             
\newcommand{\eqcons}{\mathcal{H}}  
\newcommand{\perms}{\mathcal{P}}   
\newcommand{\siarsol}{p_{%
  \mathrm{\glsentryshort{SIAR}}}}  
\newcommand{\sindysol}{p_{%
  \mathrm{\glsentryshort{SINDy}}}} 
\newcommand{\mle}{\lambda_{\max}}  
\definecolor{MyLinkColor}{HTML}{800006}
\definecolor{MyCiteColor}{HTML}{2E7E2A}
\definecolor{MyFileColor}{HTML}{131877}
\definecolor{MyURLColor}{HTML}{8A0087}
\definecolor{MyMenuColor}{HTML}{727500}
\definecolor{MyRunColor}{HTML}{137776}
\colorlet{MyLinkBorderColor}{MyLinkColor!60!white}
\colorlet{MyCiteBorderColor}{MyCiteColor!60!white}
\colorlet{MyFileBorderColor}{MyFileColor!60!white}
\colorlet{MyURLBorderColor}{MyURLColor!60!white}
\colorlet{MyMenuBorderColor}{MyMenuColor!60!white}
\colorlet{MyRunBorderColor}{MyRunColor!60!white}
  \let\initial\@gobble 
  \let\Cref\crtCref 
  \let\cref\crtcref 
\setlist[enumerate,1]{label={\arabic*)}}
\newabbreviation{CS}{CS}{%
  \initial{C}auchy–\initial{S}chwarz inequality
}
\newabbreviation[
  description={\initial{M}aximal-\initial{L}yapunov-\initial{E}xponent}
]{MLE}{MLE}{%
  maximal Lyapunov exponent
}
\newabbreviation[
  plural={Quantal Response Equilibria}
]{QRE}{QRE}{%
  \initial{Q}uantal \initial{R}esponse \initial{E}quilibrium
}
\newabbreviation[
  description={\initial{R}ock-\initial{P}aper-\initial{S}cissors}
]{RPS}{RPS}{%
  rock-paper-scissors
}
\newabbreviation{SIAR}{SIAR}{%
  \initial{S}ide-\initial{I}nformation \initial{A}ssisted \initial{R}egression
}
\newabbreviation{SINDy}{SINDy}{%
  \initial{S}parse \initial{I}dentification of \initial{N}onlinear \initial{Dy}namics
}
\newabbreviation[
  description={\initial{S}um \initial{o}f \initial{S}quares}
]{SOS}{SOS}{%
  sum-of-squares
}
\title{Data-Scarce Identification of Game Dynamics \\ via Sum-of-Squares Optimization}
\author{%
  Iosif Sakos\inst{1} \and %
  Antonios Varvitsiotis\inst{1} \and %
  Georgios Piliouras\inst{1}%
}
\authorrunning{I. Sakos et al.} 
\institute{%
  Singapore University of Technology and Design, Singapore 487372, Singapore\\%
  \email{\{iosif\_sakos, antonios, georgios\}@sutd.edu.sg}%
}
\begin{document}

\maketitle

\begin{abstract}


Understanding how players adjust their strategies in games, based on their experience, is a crucial tool for policymakers.
It enables them to forecast the system's eventual behavior, exert control over the system, and evaluate counterfactual scenarios. 
The task becomes increasingly difficult when only a limited number of observations are available or difficult to acquire. 
In this work, we introduce the \gls{SIAR} framework, designed to identify game dynamics in multiplayer normal-form games only using data from a short run of a single system trajectory. 
To enhance system recovery in the face of scarce data, we integrate side-information constraints into \gls{SIAR}, which restrict the set of feasible solutions to those satisfying game-theoretic properties and common assumptions about strategic interactions. 
\Gls{SIAR} is solved using \gls{SOS} optimization, resulting in a hierarchy of approximations that provably converge to the true dynamics of the system. 
We showcase that the \gls{SIAR} framework accurately predicts player behavior across a spectrum of normal-form games, widely-known families of game dynamics, and strong benchmarks, even if the unknown system is chaotic. 
  \keywords{%
      Game dynamics \and %
      System identification \and %
      Sum-of-squares optimization%
  }
\end{abstract}
\glsresetall 

\section{Introduction} 
\label{sec:Introduction}


Game theory is the study of strategic interaction between multiple self-interested decision makers. 
Whilst the main solution concept is that of a Nash equilibrium, this equilibrium point of view is nowadays well understood to be limited in terms of its real world applicability due to at least three different type of problems. 
The first problem is that due to the multiplicity of the Nash equilibria the equilibrium selection problem is still a largely unsolved problem at the intersection of behavioral game theory and game dynamics \parencite{harsanyi_new_1995, bolle_behavioral_2017, cronert_equilibrium_2022}.
The second problem is the computational intractability of Nash equilibria \parencite{daskalakis_complexity_2009}. 
This of course raises an important criticism for the Nash equilibrium as a solution concept captured by Kamal Jain’s quote: \enquote{If your laptop can’t find it then neither can the market}. 
Lastly, even if we allowed for multiagent learning dynamics to run forever strong impossibility results preclude convergence to Nash equilibria even asymptotically \parencite{hart_uncoupled_2003, milionis_impossibility_2023}. 
In fact, not only do standard learning dynamics such as replicator dynamics, multiplicative weights update, gradient descent, a.o., do not converge to Nash equilibria  even in standard classes of games such as zero-sum games, but they can even be formally chaotic \parencite{sato_chaos_2002, bailey_multiplicative_2018, cheung_vortices_2019, cheung_evolution_2022}. 
The last type of obstacle, i.e., the possible disagreement of game dynamics and Nash equilibria also suggests a different possible way forward that has been most clearly advocated by Fields Medal winner Stephen Smale, recognized for his seminal contributions in the field of dynamical systems. 
\Textcite{smale_dynamics_1976} argues that \enquote{equilibrium theory is far from satisfactory} since it ignores how equilibria are reached and assumes that agents can solve complex optimization problems. 
Instead, he proposes to study the same multiagent problems from the lens of differential equations.

In this paper, we are taking exactly this point of view and study strategic games from the lens of the dynamical behavior of the participating agents. 
All strategic multiplayer systems that we will be studying possess the following two key characteristics.
Firstly, players have their own preferences for each state of the system, which can be influenced by the strategies of other players. 
This is the static part of the system encoded by payoff matrices/functions. 
This part we will consider fixed and known. 
Intuitively, since this part is static we presume that we can always recover it to arbitrary accuracy by a one-off offline questioning of the participating agents.
This assumption is also in great agreement with the standard game theoretic assumption that considers the payoff structure of the game to be common knowledge amongst all agents.  
Secondly, players have the autonomy to take strategies based on their own preferences and to do so they learn from their experience and adjust their strategies based on the history of play.
This is the dynamic part of our setting and our goal would be to accurately identify these behavioral idiosyncrasies of the learning agents, e.g., how myopically do the agents learn? 
As policymakers are often able to observe the evolution of such systems for only a short period of time before it is required to act upon their knowledge we will be particularly driven by the following data scarcity consideration:  
\begin{emphasized}
  Can we discover the dynamics of a multiagent system by observing it for a short period of time?
\end{emphasized}
As a concrete example let's consider a traffic network where drivers choose their daily route based on their preference for less congestion. 
Moreover, in reaction to congestion on a particular route, a driver may decide to switch to a different route. 
The static part of the network congestion games is actually the latency function on different roads based on their current load, typically captured within the game theoretic literature by (network) congestion games \parencite{rosenthal_class_1973} and  methodological tools already exist for extracting these cost functions from actual traffic data \parencite{zhang_price_2016, monnot_routing_2022}.
Taking this game description as given one could ask if we take a few snapshots of current traffic estimates can we predict the future evolution of the system?

The modern approach of dynamical system discovery relies on vast amounts of available measurements, coupled with powerful machine learning algorithms and inexpensive parallel processing power \parencite{brunton_data-driven_2019}.
Important approaches that have received significant attention include symbolic regression and techniques relying on sparsity-promoting optimization \parencite{icke_modeling_2013, brunton_discovering_2016, rudy_data-driven_2017, kaiser_sparse_2018}.
Nevertheless, techniques that require access to massive data sets have limited applicability in settings where data are scarce, and expensive to acquire, or in time-critical applications where there is not enough time to collect data.  
To enable effective policymaking and decision-making, it is crucial to develop tools capable of identifying the game dynamics that govern the evolution of players' behavior, even with limited real-time data.
We focus on continuous-time time-invariant game dynamics.
In contrast to these prior works, we focus on the problem of identifying the ground truth dynamics using only a very small number of samples, typically, merely five samples, of a single system trajectory.


\subsubsection*{Our Contributions}
\label{sec:Contributions}
In this work, we introduce the \gls{SIAR} method, a system identification framework that is tailored for game dynamics.
\Gls{SIAR} is designed to discover the dynamical behavior of a multiagent system in real-time using limited observational data. 
Our focus is on strategic multiagent systems where agent interactions are described by game-theoretic principles.
The dynamics of such systems capture the agents' ability to learn from experience, allowing them to respond intelligently in settings not encountered during their training. 
 
\Gls{SIAR} is grounded upon polynomial regression and \gls{SOS} optimization \parencite{AMS_SC20_sum, lasserre_sum_2006} embedding it into a rich theoretical framework and allowing for formal approximation guarantees.
Critically, to have any hope of accurate system recovery given such data sparsity, our approach has to leverage side-information constraints.
By design, our framework allows the swift integration of a wide range of constraints, which are native to game-theoretic applications, such as positive correlation between the players' directions of motion and payoffs, game symmetries, a.o. 
We test our framework in a wide range of classic benchmarks including:
\begin{enumerate*}
  \item periodic replicator dynamics in matching pennies games;
  \item chaotic replicator dynamics in \param{\epsilon}-perturbed \gls{RPS} games; and
  \item equilibrium selection prediction in numerous game dynamics in multiplayer atomic congestion games with up to hundreds of pure, attracting Nash equilibria.
\end{enumerate*}
Thus, our methodology can also be used to provide insights into two  
fundamental problems of game theory: equilibrium selection and non-equilibrating (possibly chaotic) dynamics. 
In each of the above settings, our method showcases near perfect system recovery strongly outperforming previous benchmarks.   

\section{Related Work}   
\label{sec:RelatedWork}


System identification is the discipline that provides tools for learning the governing equations of a dynamical system using trajectory data \parencite{ljung_system_1999, isermann_identification_2011}. 
Considerable research has been dedicated to data-driven approaches for system identification, including deep learning \parencite{brunton_data-driven_2019, cranmer_discovering_2020}, symbolic regression \parencite{schmidt_distilling_2009,  udrescu_ai_2020a, udrescu_ai_2020b}, and statistical learning \parencite{lu_nonparametric_2019}. 
A prominent research direction, driven by the prevalence of sparse governing equations in physical systems, relies on identifying unknown systems through sparsity-promoting optimization techniques \parencite{brunton_discovering_2016, rudy_data-driven_2017, kaiser_sparse_2018}.
Nevertheless, all these approaches fall short in addressing time-critical scenarios, which are the focal point of consideration in this work.

The idea of integrating side-information to aid machine learning models in uncovering the governing equations of systems has been extensively explored, particularly within the research area of physics-informed learning \parencite{raissi_physics-informed_2019, chen_physics-informed_2021, liu_machine_2021}.
The use of \gls{SOS} optimization to search for regressors satisfying side-information constraints was initially introduced in the context of shape-constrained regression \parencite{curmei_shape-constrained_2023}.
Subsequently, \textcite{ahmadi_learning_2023} extended this approach to identify dynamical systems from limited observational data, employing side-information constraints to enhance the model's performance. 
However, the side-information constraints used by \textcite{ahmadi_learning_2023} are more tailored to physical systems, e.g. conservation laws, and not relevant to systems of strategic agents, which are the focus of this proposal.
In a game-theoretic setup properties arise as a consequence of the game, e.g., the anonymity of the players \parencite{brandt_symmetries_2009}, and characteristics of player behavior, e.g., risk-seeking/risk-averse tendency, bounded/unbounded rationality, a.o. \parencite{cranmer_discovering_2020}.

\section{Preliminaries}
\label{sec:Preliminaries}


A $\noplayers$\nobreakdash-player normal-form game $\game \eqdef (\noplayers, \pures, \payoff)$ denotes the interaction between a set of players $[\noplayers] \eqdef \set{1, \dots, \noplayers}$.
Each player~$i$ has a finite ordered set of strategies (or actions) $\pures_{i}$ of size $\nopures_{i}$, and a reward function~$\payoff_{i} \from \pures \to \reals$, where~$\pures \eqdef \pures_{1} \times \dots \times \pures_{\noplayers}$ denotes the set of all strategy profiles of~$\game$.
Typically, in a normal-form game, each player~$i$ is allowed to use mixed strategies $\action_{i} \eqdef (\action_{i 1}, \dots, \action_{i \nopures_{i}}) \in \actions_{i}$, where $\action_{i j}$ is the probability with which player~$i$ uses their $j$\nobreakdash-th strategy, and $\actions_{i}$ is the $(\nopures_{i} - 1)$\nobreakdash-simplex.
The players' reward functions naturally extend to the set of mixed strategy profiles~$\actions_{i} \eqdef \actions_{1} \times \dots \times \actions_{\noplayers}$ with 
\begin{equation}
  \payoff_{i}(\action) 
    \eqdef \mean_{\pure \sim \action}{\payoff_{i}(\pure)}
    = \sum_{j_{1} = 1}^{\nopures_{i}} \dots \sum_{j_{\noplayers} = 1}^{\nopures_{\noplayers}} \action_{1 j_{1}} \dotsm \action_{\noplayers j_{\noplayers}} \payoff_{i}(\pure_{1 j_{1}}, \dots, \pure_{\noplayers j_{\noplayers}})
      \qquad \action \in \actions, \,i \in [\noplayers].
\end{equation}

In game-theoretic setups, it is convenient to be able to distinguish between the strategy of some player~$i$ in a strategy profile~$\pure$ and the strategies of the rest of the players.
In this work, we use the standard game-theoretic shorthand $(\pure_{i}, \pure_{-i})$ to denote this distinction. 
Similarly, given some mixed strategy profile~$\action$, we write $(\action_{i}, \action_{-i})$ to distinguish between the mixed strategy of player~$i$ and the mixed strategies of the rest of the players.
Furthermore, if $\action_{i}$ represents some pure strategy~$\pure_{i j}$ of~$i$, we write $(\pure_{i j}, \action_{-i})$ to point on that fact.
Finally, we are going to use $\nash(\game)$ to denote the set of Nash equilibria of $\game$.
Recall that a strategy profile~$\action^{*}$ is called a Nash equilibrium if no player has an incentive to unilaterally deviate from it, i.e., 
\begin{equation}
  \payoff_{i}(\action^{*})
    \geq \payoff_{i}(\action_{i}, \action^{*}_{-i}) 
      \qquad \forall \action \in \actions, \,i \in [\noplayers].
\end{equation}


\subsection{Game Dynamics}
\label{sec:GameDynamics}
Game dynamics are dynamical systems that model the time-evolving behavior of the players under the assumption of repeated play.
Time-invariant game dynamics, i.e., dynamics that don't change as time progresses, are typically described by a system of ordinary differential equations of the form
\begin{subequations}
\label{eq:GameDynamics}
\begin{alignat}{2}
  \dot \action(t) 
    &= \policy\xpar[\big]{\action(t)} 
      &&\qquad \forall t \in \horizon \\
  \action(0) 
    &= \initialaction
      &&\qquad \initialaction \in \actions,
\end{alignat}
\end{subequations}
where the vector field $\policy \from \actions \to \reals^{\nopures_{1}} \times \dots \times \reals^{\nopures_{\noplayers}}$ denotes the players' update policies.
The fundamental notion of every dynamical system is that of a flow function $\flow \from \horizon \times \actions \to \reals^{\nopures_{1}} \times \dots \times \reals^{\nopures_{\noplayers}}$ that gives us access to solutions of the initial value problem in~\eqref{eq:GameDynamics}.
Specifically, for any initial strategy profile~$\initialaction$ and any time~$t$, we define the flow $\flow(t, \initialaction)$ to be the initial value solution $\action(t)$ in~\eqref{eq:GameDynamics} given that $\action(0) = \initialaction$.
The examples that follow are centered around fundamental game dynamics such as the \param{q}-replicator dynamics \parencite{giannou_rate_2021} that include the well-known (Euclidean) projection dynamics, replicator dynamics, and log-barrier dynamics as special cases, as well as the smooth Q\nobreakdash-learning dynamics \parencite{tuyls_evolutionary_2006}.
The definitions of the all aforementioned dynamics can be found in \cref{app:GameDynamics}.
A list of abbreviations can be found in \cref{app:Abbreviations}.

\section{The \glsentrylong{SIAR} Problem}
\label{sec:SIARProblem}


In this work, we present a computational framework for discovering the game dynamics that govern players' strategic evolution in a $\noplayers$\nobreakdash-player normal-form game $\game$, which we assume it's known.
Our framework, which we call \glsxtrfull{SIAR}, builds on recent advances in shape-constrained regression and \gls{SOS} optimization \parencite{curmei_shape-constrained_2023, ahmadi_learning_2023}, can operate in real-time and requires access to only limited observations of the system.

The goal of the \gls{SIAR} framework is to identify the true vector field $\policy \from \actions \to \reals^{\nopures_{1}} \times \dots \times \reals^{\nopures_{\noplayers}}$ that describes the players' behavior (or to approximate it, in case it is not polynomial) using a constant number~$\nodata$ of---possibly noisy---observations $\set[\big]{\action(t_{1}), \dots, \action(t_{\nodata})}$, where $t_{1} < \dots < t_{\nodata}$, from a short run of a single system trajectory for some initialization~$\initial \in \actions$, and estimations of their corresponding velocities, which we denote by $\dot \action(t_{1})$, \dots, $\dot \action(t_{\nodata})$. 
The duration of the short-run is given by $\Delta T \eqdef t_{\nodata} - t_{1}$.
We assume that the limited number of data~$\nodata$ and the duration of the short-run~$\Delta T$ are significant prohibitive factors, and the use of unconstrained regression to solve the problem cannot provide a solution with sufficient accuracy.

To overcome the challenges incurred by the absence of an adequate number of data, we incorporate information into the regression problem, which for reference we name the \gls{SIAR} problem, in the form of side-information constraints.
These constraints capture fundamental properties, assumptions, and expectations about player behavior and their space of strategies, and are enforced computationally using \gls{SOS} optimization.
Specifically, we assume that we have access to information about the policies~$\policy$ that can be encoded as polynomial side-information constraints over subsets of $\reals^{\nopures_{1}} \times \dots \times \reals^{\nopures_{\noplayers}} \times \actions$ in the form
\begin{subequations}
\label{eq:SideInformationConstraints}
\begin{alignat}{2}
  \ngcon_{\ell}\xpar[\big]{\policy(\action), \action}
    &\geq 0
      &&\qquad \forall \action \in \ngcons_{\ell} \subseteq \actions \\
  \eqcon_{\tau}\xpar[\big]{\policy(\action), \action} 
    &= 0
      &&\qquad \forall \action \in \eqcons_{\tau} \subseteq \actions,
\end{alignat}
\end{subequations}
and, for all $\ell$ and $\tau$, the following two statements hold:
\begin{enumerate}
    \item The subsets $\ngcons_{\ell}$ and $\eqcons_{\tau}$ are basic semialgebraic sets.
    \item The functions $\ngcon_{\ell}$ and $\eqcon_{\tau}$ are polynomials.
\end{enumerate}
In later \lcnamecrefs{sec:SIARProblem}, we are going to present numerous such properties and their formulations as side-information constraints.

For concreteness, we now introduce the complete formulation of a generic \gls{SIAR} problem that minimizes the mean squared error with respect to the system's observed data and approximates the update policies~$\policy_{i}$ of each player~$i$ by a polynomial vector field~$p_{i}$:
\begin{equation}
\label{eq:SIAR}
\begin{aligned} 
  \underset{p_{1}, \dots, p_{\noplayers}}{\text{minimize}} 
    &&&\sum_{i = 1}^{\noplayers} \sum_{k = 1}^{\nodata} \norm{p_{i}\xpar[\big]{\action(t_{k})} - \dot \action_{i}(t_{k})}^{2} \\
  \text{subject to}
    &&&\begin{alignedat}[t]{2}
      &p_{i} \in \polys^{\nopures_{i}}[\action]
        &&\qquad i = 1, \dots, \noplayers \\
      &\ngcon_{\ell}\xpar[\big]{p(\action), \action} \geq 0
        &&\qquad \forall \action \in \ngcons_{\ell} \\
      &\eqcon_{\tau}\xpar[\big]{p(\action), \action} = 0
        &&\qquad \forall \action \in \eqcons_{\tau},
    \end{alignedat}
\end{aligned} \tag{SIAR}
\end{equation}
where, for each player~$i$, $\polys^{\nopures_{i}}[\action]$ denotes the set of polynomial vector fields of size~$\nopures_{i}$ and indeterminates~$\action$.
In general, the \gls{SIAR} problem is computationally intractable.
However, it is possible to use the \gls{SOS} optimization framework to obtain a hierarchy of increasingly better solutions to the original problem.

\section{Side-Information Constraints}
\label{sec:SideInformationConstraints}


In this \lcnamecref{sec:SideInformationConstraints}, we explore some different types of information that are relevant to strategic multiplayer environments, and we model them as side-information constraints.
We observe that, in general, such information can be classified into two types:
\begin{enumerate*}
  \item information that captures geometric properties of the game dynamics, such as the invariance of the mixed strategy space of the game; and
  \item information that expresses assumptions or desiderata about the players' behavior, such as their desire to explore their strategy space as opposed to harnessing immediate rewards.
\end{enumerate*}
In the following \lcnamecrefs{sec:NormalFormGameProperties}, we describe and model as side-information constraints several game-theoretic properties that fall in these two types.
Interestingly replicator dynamics satisfy most of these properties; see \cref{app:ReplicatorDynamicsProperties}.


\subsection{Properties of Strategic Interactions in Normal-Form Games}
\label{sec:NormalFormGameProperties}


\subsubsection{Forward-Invariance of the Strategy Space}
\label{sec:ForwardInvariance}
Concerning, the former type, maybe the most natural side-information constraints are the ones modeling the forward (or positive) invariance, of the mixed strategy space~$\actions$.
Recall that a subset $\actions' \subseteq \actions$ is forward-invariant if any trajectory that is initialized in $\actions'$, i.e., $\initial \in \actions'$, is contained entirely in $\actions'$ for all positive times, i.e., $\flow(t, \initial) \in \actions'$ for all $t \geq 0$.
In terms of the dynamics' update policies~$\policy$, the above means that for each mixed strategy profile~$\action \in \actions$, and for each player~$i$, the direction of motion of $\action_{i}$ given by $\policy_{i}(\action)$ is a direction that player~$i$ can update, infinitesimally, their strategy without moving outside their mixed strategy space~$\actions_{i}$. 
Formally, $\policy(\action)$ lies in the tangent cone $\tangent_{\actions}(\action)$ of $\actions$ at the point~$\action$ \parencite{nagumo_uber_1942}, where the tangent cone $\tangent_{\actions}$ is given by 
\begin{equation}
  \tangent_{\actions}(\action)
    \eqdef \set*{
      y \in \bigtimes_{i = 1}^{\noplayers} \reals^{\nopures_{i}} 
      \given \begin{alignedat}{2}
        \smashoperator{\sum_{j = 1}^{\nopures_{1}}} y_{i j}
          &= 0
            &&\qquad \forall i \in [\noplayers] \\
        y_{i j}
          &\geq 0
            &&\qquad\text{whenever $\action_{i j} = 0$}
      \end{alignedat}
    }
      \qquad \action \in \actions.
\end{equation}
At this point, we remark that the sets $\actions$ and $\set{\action \in \actions \given \action_{i j} = 0}$ are basic semialgebraic sets, and therefore we can construct a set of side-information constraints as given in~\eqref{eq:SideInformationConstraints} that incorporate the forward-invariance property of~$\actions$.
Specifically, for each player~$i$ it suffices to constrain~$\policy_{i}$ such that:
\begin{subequations}
\label{eq:StrategySpaceForwardInvariance}
\begin{alignat}{3}
  \sum_{j = 1}^{\nopures_i} \policy_{i j}(\action)
    &= 0
      &&\qquad \forall \action \in \actions \\
  \policy_{i j}(\action)
    &\geq 0
    &&\qquad\text{$\forall \action \in \actions$ such that $\action_{i j} = 0$}
      &&\qquad j \in [\nopures_{i}].
\end{alignat}
\end{subequations}



\subsubsection{Game Symmetries}
\label{sec:GameSymmetries}
Side-information constraints can also be used to model the fact that symmetries of the underlying normal-form game, which in layman's terms imply that players or strategies are in some way interchangeable \parencite{brandt_symmetries_2009}, should also be reflected in the game dynamics.  
As a concrete example, consider a game among anonymous players---a type of game where the players' utility functions are invariant under any permutations of the players; the analysis of anonymous games can be traced back at least to the seminal work of \textcite{nash_non-cooperative_1951}.
Since in such a game the players have no information about the identities of their opponents, the evolution of their strategies must not be based on that information either.

Formally, we say that a normal-form game~$\game$ consists of anonymous players if:
\begin{enumerate*}
  \item the players have identical strategies spaces, i.e., $\actions_{1} = \dots = \actions_{\noplayers}$; and
  \item for all permutation functions $\pi \in \perms(\noplayers)$ of the players, it holds
\end{enumerate*}
\begin{equation}
\label{eq:GameAnonymity}
  \payoff_{\pi(i)}(\pure) 
    = \payoff_{i}\xpar[\big]{\pi(\pure)}
  \quad\text{where}\quad
  \pi(\pure)_{i} 
    \eqdef \pure_{\pi(i)}
      \qquad \forall \pure \in \pures, \,i \in [\noplayers].
\end{equation}
If~\eqref{eq:GameAnonymity} seems counter-intuitive at a first sight, in \cref{app:PlayerAnonymity} we have an example to build the necessary intuition.
Nonetheless, having established the above game-theoretic property, one would expect that if a normal-form game~$\game$ has anonymous players, the anonymity would be reflected in the evolution of the players' strategies.
Indeed, in many game-theoretic setups it is natural to assume that, for all permutations $\pi \in \perms(\noplayers)$, it holds
\begin{equation}
\label{eq:UpdatePolicyAnonymity}
  \policy_{\pi(i)}(\action) 
    = \policy_{i}\xpar[\big]{\pi(\action)}
  \quad\text{where}\quad
  \pi(\action)_{i} 
    \eqdef \action_{\pi(i)}
      \qquad \forall \action \in \actions, \,i \in [\noplayers].
\end{equation}
Notice that the above can directly be imposed as side-information constraints.
However, in doing so, the number of introduced constraints to the \gls{SIAR}problem would be exponential in~$\noplayers$ due to the dependence of \eqref{eq:UpdatePolicyAnonymity} to $\abs{\perms(\noplayers)}$.
In practice, the above equivalence can be used instead to simplify the \gls{SIAR} problem.
Specifically, since~\eqref{eq:UpdatePolicyAnonymity} holds for any permutation~$\pi$, one can always set $\pi_{i}(\pure) = i + 1 \mod_{1} \noplayers$, and recover the update policies $\policy_{2}$ \dots, $\policy_{\noplayers}$ from $\policy_{1}$.
Therefore, assuming the players are anonymous, we can restrict the search space of the \gls{SIAR} problem from $\polys^{\nopures_{1}}[\action] \times \dots \times \polys^{\nopures_{\noplayers}}[\action]$ to simply $\polys^{\nopures_{1}}[\action]$.


\subsubsection{Existence of Uncoupled Formulation} 
\label{sec:UncoupledFormulation}
Recall that if the game dynamics are uncoupled the evolution of the strategies of each player~$i$ depend on the strategies~$\action_{-i}$ of the other players only through their own reward function $\payoff_{i}$, i.e., given that the vector of instantaneous rewards of $i$ at $\action$, $\instantpayoff_{i}(\action)$, defined as 
\begin{equation}
\label{eq:InstantaneousPayoffs}
  \instantpayoff_{i j}(\action) 
    \eqdef \payoff_{i}(\pure_{i j})
      \qquad j \in [\nopures_{i}],
\end{equation}
remains constant, the update policy~$\policy_{i}$ is invariant under changes of $\action_{-i}$ \parencite{hart_uncoupled_2003}.
Specifically, the game dynamics of a normal-form game~$\game$ are said to be uncoupled if, for each player~$i$, there exists some vector field~$\uncoupled_{i} \from \actions_{i} \times \reals^{\nopures_{i}} \to \reals^{\nopures_{i}}$ such that
\begin{equation}
\label{eq:UncoupledUpdatePolicy}
  \policy_{i}(\action) 
    = \uncoupled_{i}\xpar[\big]{\action_{i}, \instantpayoff_{i}(\action)}
      \qquad \forall \action \in \actions.
\end{equation}
Notice that~\eqref{eq:UncoupledUpdatePolicy}, as is the case with~\eqref{eq:UpdatePolicyAnonymity}, can be directly imposed as side-information constraints.
Moreover, from a computational perspective the existence of an uncoupled formulation is a powerful property that it can reduce the search of each update policy $\policy_{i} \from \actions \to \reals^{\nopures_{i}}$ of player~$i$ to the search of a vector field $\uncoupled_{i} \from \actions_{i} \times \reals^{\nopures_{i}} \to \reals^{\nopures_{i}}$ that depends on a potentially smaller number of indeterminates.
The uncoupling of the game dynamics is, often, assumed to aid in their analysis, and is actually satisfied by many well-known dynamics such the \param{q}-replicator dynamics and the Q\nobreakdash-learning dynamics.

\subsection{Common Assumptions for Strategic Interactions}
\label{sec:NormalFormGameAssumptions}


\subsubsection{Nash Stationarity}
\label{sec:NashStationarity}
In addition to the properties presented in the previous \lcnamecref{sec:NormalFormGameProperties}, side-information constraints can also model various behavioral traits of the players.
One important such trait, exhibited by various game dynamics, is the Nash stationarity.
This property states that any trajectory initialized in a Nash equilibrium of the game remains to that point for all positive times, i.e., the Nash equilibria are a subset of the rest-points of the system.
Formally, given some normal-form game~$\game$, we say that the dynamics satisfy the Nash stationarity if
\begin{equation}
\label{eq:NashStationarity}
  \policy(\action^{*}) 
    = 0
      \qquad \forall \action^{*} \in \nash(\game).
\end{equation}
%
Notice that not all game dynamics satisfy this property.
For example, the fixed-points of the Q\nobreakdash-learning dynamics (cf.~\cref{sec:QLearningDynamics}) correspond to \glspl{QRE}, which coincide with the Nash equilibria of the game if, and only if, the explore rate~$\explorerate$ of the Q\nobreakdash-learning dynamics is zero \parencite{leonardos_exploration-exploitation_2021}.


\subsubsection{Positive Correlation} 
\label{subsubsec:PositiveCorrelation}
The strategic nature of the players can, often, be expressed by the relationship between their directions of motion~$\dot \action(t)$ and their reward functions~$\payoff\xpar[\big]{\action(t)}$.
One property that features this relationship is the positive correlation property, which expresses the adversity of each player~$i$ of decreasing its reward by unilaterally deviating from $\action(t)$ \parencite{swinkels_adjustment_1993, sandholm_population_2010}.
Formally, the positive correlation property is given, for each player~$i$, by the constraints:
\begin{equation}
\label{eq:PositiveCorrelation}
  \inner[\big]{\policy_i(\action)}{\instantpayoff_{i}(\action)} 
    > 0
      \qquad\text{$ \forall \action \in \actions$ such that $\policy(\action) \neq 0$}.
\end{equation}
The exact relationship between $\dot \action_{i}(t)$ and $\payoff_{i}\xpar[\big]{\action(t)}$ can be recovered from the above by noticing that, for all $j \in \set{1, \dots, \nopures_{i}}$, and mixed action profiles~$\action$, we have
\begin{equation*}
  \frac{\partial \payoff_{i}(\action)}{\partial \action_{i j}}
    = \frac{\partial}{\partial \action_{i j}} \xpar[\Bigg]{\sum_{k = 1}^{\nopures_{i}} \action_{i k}\payoff_{i}(\pure_{i k}, \action_{-i})}
    = \payoff_{i}(\pure_{i j}, \action_{-i})
    = \instantpayoff_{i j}(\action),
\end{equation*}
and subsequently, the left-hand side of~\eqref{eq:PositiveCorrelation} is equivalent to
\begin{equation*}
  \inner[\big]{\policy_{i}\xpar[\big]{\action(t)}}{\instantpayoff_{i}\xpar[\big]{\action(t)}}
    = \inner[\big]{\policy_{i}\xpar[\big]{\action(t)}}{\gradient_{\action_{i}} \payoff_{i}\xpar[\big]{\action(t)}}
    = \inner[\big]{\dot \action_{i}(t)}{\gradient_{\action_{i}} \payoff_{i}\xpar[\big]{\action(t)}}.
\end{equation*}

In the sequel, we will keep referring to~\eqref{eq:PositiveCorrelation} as the positive correlation property.
However, one should keep in mind that these inequalities only capture the positive correlation of the directions of motion $\dot \action(t)$ at time $t$ with a rather special---nonetheless important---quantity $\instantpayoff\xpar[\big]{\action(t)}$.
As a possible generalization of the above, one could express the positive correlation of $\action(t)$ and the parametrized vectors $\instantpayoff'_{i} \from \actions \times \nnreals \to \reals^{\nopures_{i}}$, for $i \in [\noplayers]$, given by
\begin{equation}
\label{eq:ParametrizedInstantenuousPayoffs}
  \instantpayoff'_{i j}(\action; \explorerate) 
    = \instantpayoff_{i j}(\action) - \explorerate (\ln \action_{i j} + 1)
    \qquad\text{for all $\action \in \actions$}
      \qquad j \in [\nopures_{i}],
\end{equation}
where $\explorerate \in \nnreals$ is some nonnegative constant.
Following a similar reasoning as before, we can show that the above property is captured, for each player~$i$, by the constraints:
\begin{equation}
\label{eq:GeneralizedPositiveCorrelation}
  \inner[\big]{\policy_{i}(\action)}{\instantpayoff'_{i}(\action; \explorerate)} 
    > 0
      \qquad\text{for all $\action \in \actions$ such that $\policy(\action) \neq 0$}.
\end{equation}
A positive correlation between the directions of motion $\action(t)$ and $\instantpayoff'\xpar[\big]{\action(t); \explorerate}$ expresses the adversity of each player~$i$ of decreasing their reward by a unilaterally deviating from $\action$ but, at the same time factors in, at weight $\explorerate$, the uncertainty of that deviation.
This property is the one precisely satisfied by the Q\nobreakdash-learning dynamics with explore rate $\explorerate$.

\begin{remark}
\label{rem:ConstraintsRelaxation}
  Technically, \cref{eq:PositiveCorrelation,eq:GeneralizedPositiveCorrelation} cannot be imposed as side-information constraints, since they feature nonnegativity constraints that are required to be strict for some subset of the strategy space.
  In practice, we relax those and similar conditions asking only for their nonnegativity in the whole strategy space. 
  The rationale behind these relaxations is that if the unknown game dynamics satisfy one of those properties, then they necessarily satisfy their relaxations, and therefore the unknown policy lies in the solution set of the \gls{SIAR} problem where the relaxations are imposed.
\end{remark}


\subsubsection{Elimination of Strictly Dominated Strategies}
\label{sec:DominatedStrategies}
We conclude this \lcnamecref{sec:NormalFormGameProperties}, with yet another natural behavioral trait of the players; their tendency to eventually disregard strategies that are strictly dominated by alternatives.
Let us recall that a pure strategy $\pure_{i} \in \pures_{i}$ of player~$i$ is strictly dominated if for any mixed strategy profile $\action_{-i} \in \actions_{-i}$ of the rest of the players, there exists some mixed strategy~$\action'_{i}$ of $i$ such that $\payoff_{i}(\action'_{i}, \action_{-i}) > \payoff_{i}(\pure_i, \action_{-i})$.
In repeated play, it is reasonable to expect the players to become aware of the above imbalance in their rewards and gradually lean away from such strategies of poor-performance.
Formally, the elimination of strictly dominated pure strategies of player~$i$ is given by
\begin{equation}
\label{eq:DominatedActionElimination}
  \lim_{t \to +\infty} \flow_{i j}(t, \initial)
    = 0
      \qquad\text{whenever $\pure_{i j}$ is strictly dominated}
      \qquad j \in [\nopures_{i}].
\end{equation}

Such eliminations could be progressively generalized by expanding on common knowledge assumptions.
Specifically, suppose the eventual elimination of the strictly dominated strategies $\pures' \subseteq \pures$ is common knowledge.
In that case, the players, as strategic entities, can act on that by eventually eliminating strictly dominated strategies of all the restrictions of $\game$ where some subset of $\pures'$ is unavailable. 
Now, if the latter is also common knowledge, then the same reasoning can be applied once again for a potentially extended set~$\pures'$ that includes strategies that are strictly dominated in the aforementioned restrictions.
In fact, the same reasoning can be repeatedly applied until the set~$\pures'$ cannot be extended further.
For reference, if a pure strategy~$\pure_{i}$ of $i$ belongs to the set $\pures'_{i}$ we say that it is strictly dominated by recursion. 

Although, in general, the above properties cannot be modeled as side-information constraints, there exists a set of sufficient conditions, identified in the significant work of \textcite{hofbauer_evolutionary_1996}, that lead to them.
In particular, the result requires that the dynamics belong to the special class of regular selection dynamics and satisfy a property known as convex monotonicity.
The dynamics are said to belong to the class of regular selection dynamics if, for all players~$i$, the update policies~$\policy$ take the form 
\begin{subequations}
\label{eq:RegularSelectionDynamicsUpdatePolicy}
\begin{alignat}{3}
  \sum_{j = 1}^{\nopures_{i}} \action_{i j} \cdot g_{i j}(\action) 
    &= 0
      &&\quad \forall \action \in \actions \label{subeq:RSDForwardInvariance} \\
  \action_{i j} \cdot g_{i j}(\action) 
    &= \policy_{i j}(\action)
      &&\quad \forall \action \in \actions
      &&\qquad j \in [\nopures_{i}]
\end{alignat}
\end{subequations}
where the functions $g_{i j} \from \actions \to \reals$ have an open domain and are locally Lipschitz continuous in $\actions$.
Note that, if $g_{i j}$ are polynomials, then, since $\actions$ is compact, the latter conditions are clearly met.
Furthermore, the regular selection dynamics are said to satisfy the convex monotonicity property, if for all players~$i$ and all $j \in [\nopures_{i}]$ such that $\pure_{i j} \in \pures$ is strictly dominated, possibly by recursion, it holds 
\begin{equation}
\label{eq:ConvexMonotonicity}
    \inner[\big]{\action'_{i}}{g_{i}(\action)} 
      > g_{i j}(\action)
        \qquad\text{$\forall \action'_{i} \in \actions_{i}$, and $\action \in \actions$ such that $\payoff_{i}(\action'_{i}, \action_{-i}) > \payoff_{i}(\pure_{i j}, \action_{-i})$}.
\end{equation}
Notice that the set $\set[\big]{(\action'_{i}, \action) \in \actions_{i} \times \actions \given \payoff_{i}(\action'_{i}, \action_{-i}) > \payoff_{i}(\pure_{i j}, \action_{-i})}$ is semialgebraic, and therefore we can search for appropriate update policies~$\policy$ and vector fields~$g$ that satisfy the property of the elimination of strictly dominated strategies by relaxing (cf.~\cref{rem:ConstraintsRelaxation}) and imposing \cref{eq:RegularSelectionDynamicsUpdatePolicy,eq:ConvexMonotonicity} as side-information constraints to the \gls{SIAR} problem.

\section{Solving the \glsentryshort{SIAR} Problem}
\label{sec:SIARSolution}


The list of side-information constraints presented in the previous section is, by no means, an exhaustive one.
However, they model a significant subset of contextual information that might be available to a mechanism designer seeking to understand the behavior of the players.
On that note, we are now ready to explain how such a mechanism designer can solve a \gls{SIAR} problem, and therefore discover this behavior.
To achieve the above, we rely on a cornerstone result in modern convex optimization, which allows us to efficiently search for nonnegative polynomials over a basic semialgebraic set \parencite{parrilo_structured_2000, prestel_positive_2001, lasserre_global_2001, parrilo_semidefinite_2003, lasserre_sum_2006, laurent_sums_2009} using \gls{SOS} optimization.
Specifically, let
\begin{equation}
  \semiset 
    = \set*{
      x \in \reals^{n} 
      \given \begin{alignedat}{2}
        \ngcon_\ell(x) &\geq 0
          &&\qquad \ell = 1, \dots, m_{\ngcon} \\
        \eqcon_\tau(x) &= 0
          &&\qquad \tau = 1, \dots, m_{\eqcon}
      \end{alignedat}
    }
\end{equation}
be some semialgebraic set, where $n, m_{\ngcon}, m_{\eqcon} \in \naturals$ are some arbitrary constants.
The result states that we can search, using semidefinite programming, for a fixed-degree~$d_{p}$ polynomial $p \from \reals^{n} \to \reals$, as well as some fixed-degree~$d_{q_{\tau}}$ polynomials $q \from \reals^{n} \to \reals$, for $\tau = 1$, \dots, $m_{\eqcon}$ and fixed-degree~$2d_{\sos_{\ell}}$ \gls{SOS} polynomials $\sos_{\ell} \from \reals^{n} \to \nnreals$, for $\ell = 0$, \dots, $m_{\ngcon}$, that satisfy the Putinar-type decomposition
\begin{equation}
    p(x) 
      = \sos_{0}(x) + \sum_{\tau = 1}^{m_{\ngcon}} q(x) \ngcon_{\tau}(x) + \sum_{\ell = 1}^{m_{\eqcon}} \sos(x) \eqcon_{\ell}(x).
\end{equation}
Clearly, the existence of such a decomposition is a sufficient condition for the nonnegativity of the polynomial $p$ over the semialgebraic set $\semiset$.
The condition is also necessary under additional mild assumptions described by \textcite[Theorem 3.20]{laurent_sums_2009}.
Furthermore, by increasing the degrees $d_{p}$, $d_{q_{\ell}}$ and $d_{\sos_{\tau}}$ we can create a hierarchy of semidefinite programs that provide better approximations to $p$; we refer the interested reader to \cref{app:SIARExample} for an example.

\section{Approximation Guarantees of the \glsentryshort{SIAR} Framework}
\label{sec:ApproximationGuarantees}


As we demonstrated in the previous \lcnamecref{sec:SIARSolution}, for every fixed-degree~$d$ we get a distinct \gls{SOS} problem resulting in a hierarchy of increasingly better approximations to the value of the \gls{SIAR} problem.
However, it is unclear whether this approach can consistently provide good approximations to the update policies~$\policy$ for two reasons.
Firstly, we are attempting to approximate a potentially non-polynomial vector field~$\policy$ using a polynomial one.
Secondly, since we are strengthening the polynomial nonnegativity constraints expressing the side-information constraints using Putinar-type certificates, even if $\policy$ is indeed a polynomial vector field, we have no guarantee that it satisfies those certificates.

Despite these challenges, a recent result by \textcite[Theorem 5.5]{ahmadi_learning_2023} provides to the hierarchy of \gls{SOS} relaxations important approximation guarantees given that the side-information constraints imposed in the \gls{SIAR} problem satisfy certain conditions.
In \cref{thm:SideInformationConstraintsMetrics}, whose proof can be found in \cref{app:ApproximationGuaranteesProofs}, we show that these conditions are indeed satisfied by all side-information constraints we presented in \cref{sec:SideInformationConstraints}.
\begin{restatable}{lemma}{SideInformationConstraintsMetrics}
\label{thm:SideInformationConstraintsMetrics}
  For each type of the side-information constraints~$S$ in \cref{sec:SideInformationConstraints}, we can construct functionals $L_{S}(f) \from \condiffs(\actions) \to \reals$ that quantify how close some vector field of update policies $f$ satisfies $S$.
  In particular, $L_{S}$ satisfy the following properties:
  \begin{enumerate}
    \item $L_{S}(f) = 0$ if, and only if, $f$ satisfies $S$.
    \item For any $\delta > 0$, there exists $\epsilon > 0$, such that $\abs{L_S(f) - L_S(f')} \leq \delta$ whenever $f, g \in \condiffs(\actions)$ satisfy:
    \begin{equation}
        \max_{\action \in \actions} \norm{f(\action) - g(\action)} 
          \leq \epsilon 
        \quad \text{and} \quad  
        \max_{\action, i, j, k, \ell} \abs[\bigg]{\frac{\partial f_{i j}(\action)}{\partial \action_{k \ell}} - \frac{\partial g_{i j}(\action)}{\partial \action_{k \ell}}}  
          \leq \epsilon,
    \end{equation}
    where $\action \in \actions$, $i, k \in [\noplayers]$, $j \in [\nopures_{i}]$, and $j \in [\nopures_{k}]$.
  \end{enumerate}
\end{restatable}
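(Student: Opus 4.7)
The plan is to construct each $L_S$ as a nonnegative integral functional measuring the $L^{2}$-violation of the relevant constraint. By \eqref{eq:SideInformationConstraints}, every constraint in \cref{sec:SideInformationConstraints} can be written as a finite family of polynomial equalities $\eqcon_\tau(f(\action), \action) = 0$ on $\eqcons_\tau \subseteq \actions$ and polynomial inequalities $\ngcon_\ell(f(\action), \action) \geq 0$ on $\ngcons_\ell \subseteq \actions$, after applying \cref{rem:ConstraintsRelaxation} to every strict inequality. Fixing a finite Borel measure $\mu_\tau$ on each $\eqcons_\tau$ (respectively $\mu_\ell$ on each $\ngcons_\ell$) --- e.g., normalized Hausdorff measure on the relative interior, or counting measure when the set is finite, as for Nash stationarity --- I would define
\begin{equation*}
    L_S(f) \eqdef \sum_\tau \int_{\eqcons_\tau} \eqcon_\tau\xpar[\big]{f(\action), \action}^{2} \, d\mu_\tau(\action) + \sum_\ell \int_{\ngcons_\ell} \max\set[\big]{0, -\ngcon_\ell\xpar[\big]{f(\action), \action}}^{2} \, d\mu_\ell(\action).
\end{equation*}
The uncoupled-formulation condition \eqref{eq:UncoupledUpdatePolicy} and the elimination of strictly dominated strategies both require minor reformulation to fit this template, as discussed below.

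For the first property, since $f \in \condiffs(\actions)$ and the polynomials $\eqcon_\tau, \ngcon_\ell$ are continuous, each integrand is a nonnegative continuous function of $\action$. Hence $L_S(f) = 0$ iff every integrand vanishes on the topological support of its measure, which by continuity is equivalent to pointwise satisfaction of each constraint on the corresponding semialgebraic set --- i.e., the statement $S$. For the second property, each integrand has the form $\Phi(f(\action), Df(\action), \action)$ with $\Phi$ continuous (a polynomial composed with $\max\set{0, \cdot}^{2}$); derivatives of $f$ appear only in the convex-monotonicity case, through the substitution $\ngcon_{ij} = f_{ij}/\action_{ij}$ on the relative interior of $\actions$. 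On the compact set $\actions$ the relevant $C^{1}$-image of $f$ lies in a bounded set whose size depends only on $\norm{f}_{\condiffs}$, and $\Phi$ is uniformly continuous on bounded sets; since each $\mu_\tau, \mu_\ell$ has finite total mass, $\abs{L_S(f) - L_S(g)} \leq \delta$ follows once $\epsilon$ is taken sufficiently small.

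The main obstacle is handling the existential quantifier in the uncoupled formulation \eqref{eq:UncoupledUpdatePolicy} and the strict inequality in the convex-monotonicity condition \eqref{eq:ConvexMonotonicity}, since both appear to fall outside the template above. For the uncoupled formulation, I would replace ``$\exists \uncoupled_i$'' by the equivalent equality $f_i(\action) = f_i(\action')$ on the semialgebraic consistency set $\set{(\action, \action') \in \actions^{2} \given \action_i = \action'_i,\ \instantpayoff_i(\action) = \instantpayoff_i(\action')}$ and penalize its $L^{2}$ violation. For convex monotonicity, I would relax the strict payoff hypothesis to a nonstrict one (\cref{rem:ConstraintsRelaxation}) and penalize $\max\set{0, \ngcon_{ij}(\action) - \inner{\action'_i}{\ngcon_i(\action)}}^{2}$ integrated over the closed semialgebraic set $\set{(\action'_i, \action) \in \actions_i \times \actions \given \payoff_i(\action'_i, \action_{-i}) \geq \payoff_i(\pure_{ij}, \action_{-i})}$. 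Once these two rewrites are in place, the zero-locus and continuity arguments apply verbatim to complete the proof.
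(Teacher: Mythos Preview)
Your construction takes a different route from the paper's: the paper uses sup-norm functionals rather than $L^{2}$ integrals. For positive correlation it sets $L_{S}(f)=\max_{\action\in\actions}\set{0,-\inner{f(\action)}{\instantpayoff(\action)}}$, and for the uncoupled case it takes the sup-norm distance from $f$ to the image of the set of $C^{1}$ uncoupled policies (an infimum over the auxiliary field rather than an equality on a consistency set). Property~1 works in either framework, and your elimination of the existential quantifier via the set $\set{(\action,\action')\in\actions^{2}\given \action_{i}=\action'_{i},\ \instantpayoff_{i}(\action)=\instantpayoff_{i}(\action')}$ is a clean alternative.

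There is, however, a genuine gap in your argument for property~2. The quantifier order in the lemma is $\forall\delta\,\exists\epsilon\,\forall f,g$: a single $\epsilon$ must work for every pair. Your squared integrands do not deliver this. For instance, in the positive-correlation term
\[
\abs[\big]{\max\set{0,-\inner{f(\action)}{\instantpayoff(\action)}}^{2}-\max\set{0,-\inner{g(\action)}{\instantpayoff(\action)}}^{2}}
\le \norm{\instantpayoff(\action)}\,\norm{f(\action)-g(\action)}\,\bigl(\abs{\inner{f}{\instantpayoff}}+\abs{\inner{g}{\instantpayoff}}\bigr),
\]
and the last factor is unbounded over $\condiffs(\actions)$. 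You implicitly concede this when you say the image lies in a bounded set ``whose size depends only on $\norm{f}_{\condiffs}$'': that makes your $\epsilon$ depend on $\norm{f}_{\condiffs}$, contrary to the required order. The paper's sup-norm functionals avoid the issue because every constraint in \cref{sec:SideInformationConstraints} is affine in the vector field, so each $L_{S}$ is globally Lipschitz in $f$ (e.g., $\epsilon=\delta/\max_{\action}\norm{\instantpayoff(\action)}$ works uniformly). Dropping the squares in your integrands would repair your argument for the same reason.

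A smaller issue: for convex monotonicity you substitute $g_{ij}=f_{ij}/\action_{ij}$ on the interior, but \eqref{eq:ConvexMonotonicity} is required on all of $\actions$, and this quotient need not extend continuously to the boundary even when a valid $g\in\condiffs(\actions)$ exists. The paper instead keeps $g$ as an independent $\condiffs(\actions)$ field and takes an infimum over it, mirroring its treatment of the uncoupled constraint.
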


By the above \lcnamecref{thm:SideInformationConstraintsMetrics}, and using Theorem 5.5 of \textcite{ahmadi_learning_2023}, we have the following \lcnamecref{thm:ApproximationGuarantees}:
\begin{corollary}
\label{thm:ApproximationGuarantees}
  Consider a normal-form game~$\game$ whose players update their strategies based on some update policies $\policy \from \actions \to \reals^{\nopures_{1}} \times \dots \times \reals^{\nopures_{\noplayers}}$.
  If $\policy \in \condiffs(\actions)$ is satisfying any combination $S_{1}$, \dots, $S_{K}$ of the properties in \cref{sec:SideInformationConstraints}, then for every time horizon~$T$ and approximation error $\epsilon > 0$ there exists some polynomial vector field $p \from \actions \to \reals^{\nopures_{1}} \times \dots \times \reals^{\nopures_{\noplayers}}$ such that
  \begin{subequations}
  \begin{alignat}{2}
    \max_{\action \in \actions} \norm{\flow_{\policy}(t, \initial) - \flow_{p}(t, \initial)}
      &\leq \epsilon 
        &&\qquad \forall \initial \in \actions, \,t \in [0, T] \\
      L_{S_{k}}(p)
        &\leq \epsilon
          &&\qquad \forall k \in [K],
  \end{alignat}
  \end{subequations}
  where $\flow_{\policy}$, $\flow_{p} \from \horizon \times \actions \to \actions$ are the flow functions corresponding to the update polices~$\policy$ and~$p$, respectively, and $L_{S_{1}}$, \dots, $L_{S_{K}}$ are the metrics that corresponds to the properties $S_{1}$, \dots, $S_{K}$ as these are defined in \cref{thm:SideInformationConstraintsMetrics}.
  Furthermore, the nonnegativity conditions of the side-information constraints of $S_{1}$, \dots, $S_{K}$ with respect to the update policies~$p$ can be certified by Putinar-type \gls{SOS} certificates.
\end{corollary}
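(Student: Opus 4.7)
The plan is to reduce the statement to an invocation of Theorem 5.5 of Ahmadi and Khadir, whose hypotheses precisely match, once \cref{thm:SideInformationConstraintsMetrics} is in hand, the setup of our \gls{SIAR} problem. That theorem says, informally, that if one is given a $C^1$ vector field on a compact Archimedean basic semialgebraic domain together with a finite list of polynomial side-information constraints endowed with continuous (in the $C^1$ topology) violation functionals, then there exists a polynomial vector field whose trajectories are uniformly close to the original ones over any fixed compact time horizon, whose violation functionals are arbitrarily small, and whose nonnegativity conditions admit Putinar-type \gls{SOS} certificates. So the task is to check that our setting satisfies the three structural requirements: (a) the domain is Archimedean compact, (b) the constraints are polynomial over basic semialgebraic sets, and (c) the violation functionals are continuous under $C^1$-perturbations.

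Requirement (a) is immediate because $\actions$ is a product of simplices, so the defining polynomials include bounding inequalities that make the associated quadratic module Archimedean. Requirement (b) was verified constraint-by-constraint throughout \cref{sec:SideInformationConstraints}: forward-invariance of $\actions$, anonymity, existence of an uncoupled formulation, Nash stationarity, (generalized) positive correlation, and regular selection with convex monotonicity are all polynomial conditions on basic semialgebraic subsets of $\reals^{\nopures_{1}} \times \dots \times \reals^{\nopures_{\noplayers}} \times \actions$ (possibly after the mild relaxation of strict inequalities described in \cref{rem:ConstraintsRelaxation}). Requirement (c) is the content of \cref{thm:SideInformationConstraintsMetrics}: the functionals $L_{S}$ vanish exactly when $S$ is satisfied and are continuous with respect to simultaneous uniform closeness of values and first partial derivatives.

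With the three requirements in place, applying Theorem 5.5 of Ahmadi and Khadir yields a polynomial $p$ that is $C^1$-close to $\policy$ on $\actions$ and whose side-information certificates are Putinar-type. The uniform flow bound on $[0, T]$ is a direct consequence of this $C^1$-closeness: since $\policy \in \condiffs(\actions)$ is Lipschitz on the compact set $\actions$, a standard Gronwall argument shows that if $\max_{\action} \norm{\policy(\action) - p(\action)}$ is sufficiently small then the flows $\flow_{\policy}(\cdot, \initial)$ and $\flow_{p}(\cdot, \initial)$ stay within $\epsilon$ of each other uniformly over all initial conditions $\initial \in \actions$ and all $t \in [0, T]$; one only needs to choose the approximation accuracy of $p$ small enough relative to $\epsilon$, $T$, and the Lipschitz constant of $\policy$. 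The bounds $L_{S_{k}}(p) \leq \epsilon$ then follow from the second clause of \cref{thm:SideInformationConstraintsMetrics} applied to the same $C^1$-approximation.

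The main technical subtlety I expect to have to address is that some of our constraints (most notably the forward-invariance equalities and the convex monotonicity inequalities) are attained with equality on relevant boundary subsets of $\actions$, so a direct appeal to Putinar's Positivstellensatz to produce the certificates is not automatic. The standard workaround, and the one implicit in Ahmadi and Khadir's construction, is to exploit the separation between the equality part (handled via ideal membership, i.e., multipliers without an \gls{SOS} positivity requirement) and the inequality part (handled by first perturbing strict inequalities into slack form and then invoking Putinar on the Archimedean quadratic module); this keeps all certificates of bounded degree once the approximation quality of $p$ is fixed, which is exactly what is needed to stitch the hypothesis check and the conclusion together.
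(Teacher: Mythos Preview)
Your proposal is correct and mirrors the paper's own argument: the corollary is stated there as an immediate consequence of \cref{thm:SideInformationConstraintsMetrics} together with Theorem~5.5 of Ahmadi and Khadir, with no further proof given. Your hypothesis checks (Archimedean compactness of $\actions$, polynomiality of the constraints, $C^1$-continuity of the $L_S$ functionals) and the Gronwall remark simply make explicit what the paper leaves implicit in that one-line citation.
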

In other words, \cref{thm:ApproximationGuarantees} guarantees that, for any finite time horizon and any desirable accuracy, there exists a degree of the \gls{SOS} hierarchy of the \gls{SIAR} problem whose solutions can approximate to the desired accuracy and for the given time-horizon, the trajectories of the unknown game dynamics given by $\policy$, and approximately satisfy any properties in \cref{sec:SideInformationConstraints} that we imposed to the optimization problem.
Although, this \lcnamecref{thm:ApproximationGuarantees} does not indicate the required degree that achieves those guarantees, in practice achievable $d$ can be relatively small;
see \cref{app:SIARExample} for an example of a \gls{SIAR} solution.

\section{Experimental Evaluation of the \glsentryshort{SIAR} Framework}
\label{sec:Experiments}


In this \lcnamecref{sec:Experiments}, we perform an empirical evaluation of the \gls{SIAR} framework in various normal-form games, game dynamics---polynomial and non-polynomial---and significant game-theoretical benchmarks such as the applicability of the \gls{SIAR} solution in equilibrium selection problems, and the recovery of chaotic dynamical systems.
These benchmarks were selected to demonstrate the potential of the \gls{SIAR} framework and comparative performance to state-of-the-art alternatives like the \gls{SINDy} framework by \textcite{brunton_data-driven_2019}.


\subsection{Comparative Performance of the \glsentryshort{SIAR} Framework}
\label{sec:SIARVsSINDy}
In this \lcnamecref{sec:SIARVsSINDy}, we compare the performance of the \gls{SIAR} framework with the one of the \gls{SINDy} framework introduced by \textcite{brunton_data-driven_2019} in recovering the update policies~$\policy$ of the replicator dynamics for the matching pennies game we analyzed in \cref{ex:SIARProblem,ex:SIARProblemSolution}.
Specifically, we consider the trajectory~$\set{\flow(t, \initialaction) \given t \in \horizon}$, where $\initialaction = \xpar[\big]{(0.25, 0.75), (0.125, 0.875)}$, and we use the dataset $\action(t_{k}) = \flow(t_{k}, \initialaction)$ and $\dot \action(t_{k}) = \policy\xpar[\big]{\action(t_{k})}$, where $t_{1} = 0.0$, $t_{2} = 0.2$, \dots, $t_{5} = 0.8$; initially, no noise is added to dataset.
Then, fixing degree~$d = 3$ for the solutions, we approximate~$\policy$ by, both, a solution~$\sindysol$ to the \gls{SINDy} problem and a solution~$\siarsol$ to the \gls{SIAR} problem, further imposing the forward invariance of the state space and the positive correlation property as side-information constraints for the latter.
Finally, we evaluate the performance of $\sindysol$ and $\siarsol$ on task of predicting the system trajectory in the time-interval $[t_{*}, 10]$, where $t_{*} = 1$ denotes the evaluation time.

As we depict in \cref{fig:SIARvsSINDy}, the accuracy of the \gls{SIAR} solution~$\siarsol$ far exceeds the one of the \gls{SINDy} solution~$\sindysol$ in this task.
\begin{figure}[hb!]
  \centering
  \includegraphics[width=\textwidth]{%
    TikZ_figures/Example-RD-MatchingPennies}
  \caption{%
    On the left, are the trajectories of the dynamical system as predicted by the \gls{SIAR} and \gls{SINDy} frameworks in the interval $[t_{*}, 10]$, where $t_{*} = 1$ denotes the evaluation time.
    The true trajectory, depicted in faint dashed lines, is given by the replicator dynamics for the matching pennies game initialized at $\xpar[\big]{(0.25, 0.75), (0.125, 0.875)}$. 
    The two solutions were computed using a sample of five data points at times $t = 0.0, 0.2, \dots, 0.8$, depicted as dots.
    For the \gls{SIAR} problem, we used the forward invariance of the strategy space and the positive correlation property as side-information constraints.
    The \gls{SIAR} solution far outperforms the \gls{SINDy} solution, achieving near perfect recovery of the true trajectory.
    On the right, are the phase plots of the two solutions, where the aforementioned trajectory is depicted in black, and the dataset as black dots.
    The phase plot of the \gls{SIAR} solution is nearly identical to the true phase plot of the system, while the phase plot of the \gls{SINDy} solution is accurate only around the given dataset.
  }
  \label{fig:SIARvsSINDy}
\end{figure} 
Moreover, $\sindysol$ is accurate only around the evaluation point $t_{*}$, i.e., in the proximity of the given dataset.
In other words, $\sindysol$ acts as a local approximation to~$\policy$ around $t_{*}$.
On the other hand, $\siarsol$ provides a reliable approximation of the system trajectory for all times in $[t_{*}, 10]$, and, in fact, its phase plot with respect to the $\action_{1 1}$ and $\action_{1 2}$ is identical to the one of the true system, and therefore, $\siarsol$ provides a global approximation.
Adding noise to the dataset, the performance of the \gls{SIAR} solution when different sets of side-information are imposed is depicted in \cref{fig:UsingSIARWithNoisyData}.
\begin{figure}[hb!]
  \centering
  \includegraphics[width=\textwidth]{%
    TikZ_figures/Example-RD-MatchingPennies-SIAR-Noisy-PhasePlot}%
  \caption{%
    Phase plots of the \gls{SIAR} solutions for different sets of side-information constraints on the problem of identifying the replicator dynamics for the matching pennies game given a noisy sample of five data points, depicted as red dots, from a single trajectory of the system depicted in black.
    As we add side-information constraints, the accuracy of the solutions increase, and near perfect recovery is achieved whenever, both, the forward invariance of the strategy space and the positive correlation property are imposed to the \gls{SIAR} problem.
  }
  \label{fig:UsingSIARWithNoisyData}
\end{figure}
These solutions were computed using a noisy dataset of $K = 5$ data points $\tilde \action(t_{k}) \sim \normal\xpar[\big]{\action(t_{k}), 0.01}$ and $\dot \tilde action(t_{k}) = \xpar[\big]{\tilde \action(t_{k + 1}) - \tilde \action(t_{k})} / (t_{k + 1} - t_{k})$, for $t_{1} = 0.0$, $t_{2} = 0.2$, \dots, $t_{K + 1} = 1.0$.
The performance of the \gls{SIAR} solutions increases as the set of the side-information is expanded, and near perfect recovery is achieved when, both, the forward invariance of the strategy space and the positive correlation property are imposed as side-information constraints.


\subsection{Using the \glsentryshort{SIAR} Solution to Solve the Equilibrium Selection Problem}
\label{sec:SIAREquilibriumSelectionProblem}
In this \lcnamecref{sec:SIAREquilibriumSelectionProblem}, we use the \gls{SIAR} framework to approximate the game dynamics of an atomic congestion game~$\game$.
We evaluate the performance of the \gls{SIAR} solutions for various update policies, including the ones of replicator dynamics, the log-barrier dynamics, and the smooth Q\nobreakdash-learning dynamics (cf.~\cref{sec:GameDynamics}).
Importantly these game dynamics feature point-wise convergence of their trajectories to stationary points.
In the case of, replicator dynamics and log-barrier dynamics these are Nash equilibria of the game~\parencite{sakos_beating_2024}, while the Q\nobreakdash-learning dynamics converge point-wise to the \gls{QRE} of the game (cf.~\cref{sec:QLearningDynamics}), which are approximately Nash equilibrium points as the exploration rate~$\explorerate$ of the Q\nobreakdash-learning dynamics gets closer to zero.
Furthermore, as the number of players~$\noplayers$ increases, the number of Nash equilibria of $\game$ increase exponentially with respect to $\noplayers$.
Specifically, for the class for two-strategy congestion games we consider the number of equilibrium points is $\abs[\big]{\nash(\game)} = \binom{\noplayers}{\noplayers / 2}$.
This implies that the players are required to solve an exponentially large equilibrium selection problem, and their update policies~$\policy$ correspond to solutions to this problem.
Therefore, good approximations to $\policy$ can capture this solution, and allow a mechanism designer to compute how the players' preferences over the set of Nash equilibria are affected by the system's initial conditions.

As is discussed in~\cref{ex:CongestionGameAnonymity}, atomic congestion games exemplify the property of anonymity of the players, since the rewards function of each player~$i$ depends solely on how many other players have chosen the same strategy as $i$, and not the identities of those players.
For that reason, it is reasonable to assume that the game dynamics satisfy the anonymity of the players, a property that can imposed as side-information constraints in the \gls{SIAR} problem using~\eqref{eq:UpdatePolicyAnonymity}.
Our task in this \lcnamecref{sec:SIAREquilibriumSelectionProblem} is to quantify how close a solution to the \gls{SIAR} problem, where we impose the forward invariance of the state space, the anonymity of the players, and the positive correlation property (given by~\eqref{eq:GeneralizedPositiveCorrelation} in the case of the Q\nobreakdash-learning dynamics) as side-information constraints, are to solve the equilibrium selection problem, i.e., to quantify how close is the point of convergence of the predicted trajectories to the ones of the unknown game dynamics.

We evaluate the \gls{SIAR} framework's average performance on this task using a sample of $100$ \gls{SIAR} solution trained to different random trajectories,
The total number of data~$\nodata = 5$, and the duration of the short-run~$\Delta T = 0.25$ remain fixed for each \gls{SIAR} problem.
The minimum score is $96.1\%$ attained for Q\nobreakdash-learning dynamics and $d = 4$; further details, and a table with the overall scores can be found in \cref{app:AdditonalFiguresAndTables}.


\subsection{Identification of Chaotic Game Dynamics}
\label{sec:ChaoticGameDynamics}
As a final test case, we consider a game of \param{\epsilon}-perturbed \glsxtrlong{RPS} (\param{\epsilon}-\glsxtrshort{RPS}), first studied by \textcite{sato_chaos_2002}. 
The \param{\epsilon}-\glsxtrshort{RPS} is based on a classic \gls{RPS} game, but ties between the two players are broken slightly in favor of one of the players.
Specifically, while any winning combination gives the winning player utility~$1$ and the losing player utility~$- 1$, ties yield utility~$\epsilon$ to the first player, and $- \epsilon$ to the second player, where $- 1 < \epsilon < 1$.
Under the above definition, a classic \gls{RPS} game corresponds to a \param{0}-\glsxtrshort{RPS}.

In a classic rock-paper-scissors game, i.e., \param{0}-\glsxtrshort{RPS}, it is well-known that orbits of the replicator dynamics correspond to cycles around the unique mixed Nash equilibrium of the game.
On the contrary, for any $\epsilon \neq 0$, \textcite{sato_chaos_2002} showed that some orbits exhibit chaos.
The observation was formally proven latter by \textcite{hu_chaotic_2019} by analyzing the spectrum of Lyapunov exponents of the system, which are the rate of separation of any two initially infinitesimally close trajectories.
In the example, we are interested in the \gls{MLE} of the system, given by
\begin{equation}
  \mle 
    \eqdef \sup\set[\bigg]{\lim_{t \to \infty} \lim_{\norm{\delta \cdot u} \to \infty}\frac{1}{t} \ln \frac{\norm{\flow(t, \initialaction) - \flow(t, \initialaction + \delta \cdot u)}}{\norm{\delta \cdot u}} \given \initialaction \in \actions, \,u \in \tangent_{\actions}(\initialaction)}.
\end{equation}
Standard analysis tools can approximate the \gls{MLE} by solving an eigenvalue problem.
The \gls{MLE} is a notion of predictability for the dynamical system; a large positive \gls{MLE} means that a slight estimation error in the system's initialization may lead to radically different states as time evolves.
The timescale defined as the inverse of the \gls{MLE} is called Lyapunov time, and it mirrors the limits in the predictability of the system.

The chaotic nature of the replicator dynamics in a \param{\epsilon}-\glsxtrshort{RPS} game can be visualized through a Poincaré section, i.e., the set of points where an orbit of the system intersects a lower-dimensional subspace.
Following \textcite{sato_chaos_2002}, we consider the subspace $\set{\action \in \actions \given \sum_{i = 1}^{2} \action_{i 1} - \action_{i 2} = 0}$.
In \cref{fig:PoincareSectionsRPS} (left), we plot the Poincaré sections (specifically, their projections into the ($\action_{1 1}, \action_{2, 2}$)\nobreakdash-plane) for different trajectories of the replicator dynamics for a \param{0.25}-\glsxtrshort{RPS} game; we consider the same trajectories as \textcite{sato_chaos_2002}.
\begin{figure}[hb!]
  \centering
  \setkeys{Gin}{width=0.3\textwidth}
  \includegraphics[page=1]{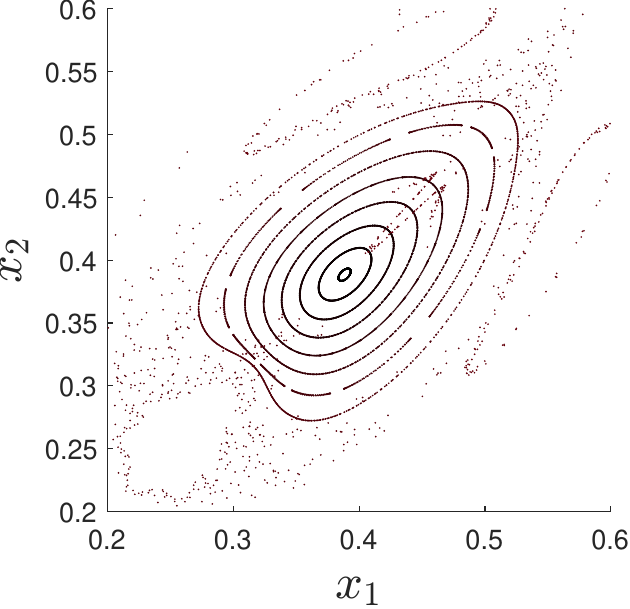}%
  \includegraphics[page=2]{SIAR_perrtubed_rock_paper_scissors}%
  \includegraphics[width=0.35\textwidth, page=4]{SIAR_perrtubed_rock_paper_scissors}
    \caption{%
        On the left, are different Poincaré sections (projected into ($\action_{1 1}, \action_{2, 2}$)\protect\nobreakdash-plane) of the replicator dynamics for a \param{0.25}-\glsxtrshort{RPS} game.
        In the middle, the same Poincaré sections as computed using the \gls{SIAR} solution.
        On the right, is depicted the distance of the corresponding trajectories in Lyapunov times.
    }
    \label{fig:PoincareSectionsRPS}
\end{figure}
In \cref{fig:PoincareSectionsRPS} (middle), we depict the same Poincaré sections as predicted by the \gls{SIAR} solution.
Remarkably, despite the chaotic nature of the replicator dynamics, the sections for the two systems are almost indistinguishable; in \cref{fig:PoincareSectionsRPS} (right), we visualize the distance of those trajectories in Lyapunov times.
The results of the comparison indicate that no significant error ($\geq 0.01$) is observed for at least $8$ Lyapunov times, which is a strong indication of the similarity of the replicator dynamics and the \gls{SIAR} solution.

\section{Discussion \& Future Work}
\label{sec:Discussion}

As systems of interconnected intelligent agents become more pervasive in our daily lives, the ability to identify their dynamical behavior has become crucial for policymakers. 
This allows them to understand the limiting behavior of the system, to control the system to desired outcomes, and to assess counterfactual scenarios of what could have happened under different conditions.
However, current data-driven approaches, which rely on extensive observational data and powerful machine learning algorithms fall short in time-critical scenarios where rapid identification and control of system dynamics are essential to address unexpected events.

In this work, we tackle the challenge of identifying the game dynamics in a strategic multiagent environment using limited observations.
The \glsxtrfull{SIAR} framework introduced in this work exploits the natural game-theoretic traits of such systems and discovers the true dynamics with high accuracy.
Our work brings together tools, techniques, and insights from game theory, learning in games as well as \gls{SOS} optimization, and raises several interesting directions for future work. 
A rather intriguing step is, of course, to move from system identification to actual control of the system.
Specifically, it would be rather beneficial if one could efficiently introduce ``small'' changes to the dynamical behavior such that it results in desirable outcomes, e.g., reduction of the social costs, equilibrium selection, control of chaos, etc.
Bringing such considerations in the current mix of ideas holds the promise of many exciting applications as well as the opportunity for cross-fertilization between traditionally separate fields such as mechanism design, control theory, and multiplayer learning.

\begin{credits}
  \subsubsection{\ackname}
  This research is supported by the MOE Tier 2 grant (MOE-T2EP20223-0018) and by the National Research Foundation, Singapore, and A*STAR under the Quantum Engineering Programme (NRF2021-QEP2-02-P05).
\end{credits}

\newpage
\appendix
\renewcommand{\theequation}{A\arabic{equation}}

\printabbreviations

\section{Game Dynamics}
\label{app:GameDynamics}


In this \lcnamecref{app:GameDynamics}, we give the formal definitions of the \param{q}-replicator dynamics and smooth Q\nobreakdash-learning dynamics that we use throughout this work.


\subsubsection{\param{q}-Replicator Dynamics}
\label{sec:qReplicatorDynamics}
The \param{q}-replicator dynamics \parencite{giannou_rate_2021} are a family of game dynamics that are described by the parametrized update policies $\policy \from \actions \times \nnreals \to \reals^{\nopures_{1}} \times \dots \times \reals^{\nopures_{\noplayers}}$ given by
\begin{equation}
\label{eq:qReplicatorUpdatePolicies}
  \policy_{i j}(\action; q) 
    = \action_{i j}^{q} \xpar[\Bigg]{
        \payoff_{i}(\pure_{i j}, \action_{-i})
      - \frac{
        \sum_{k = 1}^{\nopures_{i}} \action_{i k}^{q} \payoff_i(\pure_{i k}, \action_{-i})
      }{
        \sum_{k = 1}^{\nopures_{i}} \action_{i k}^{q}
      }
    }
    \qquad \action \in \actions, \,i \in [\noplayers], \, j \in [\nopures_{i}]. \tag{QRD}
\end{equation}
Special values of the parameter~$q$ lead to some well-known game dynamics with vastly different properties.
If we set $q = 0$, then for any fully mixed initial strategy profile $\action \in \interior \actions$, the game dynamics (cf. \eqref{eq:GameDynamics}) mimic the (Euclidean) projection dynamics given by the update policies
\begin{equation}
\label{eq:PDUpdatePolicies}
  \policy_{i j}(\action; 0) 
    = \begin{cases}
        \payoff_i(\pure_{i j}, \action_{-i}) 
      - \frac{
        \sum_{k \in \supp(\action_i)} \payoff_i(\pure_{i k}, \action_{-i})
      }{
        \abs{\supp(\action_{i})}
      }
        &j \in \supp(\action_{i}) \\
      0
        &\text{otherwise}
    \end{cases}
    \qquad \action \in \actions, \,i \in [\noplayers], \, j \in [\nopures_{i}], \tag{PD}
\end{equation}
where $\supp(\action_{i})$ denotes the support of the mixed strategy~$\action_{i}$ of player~$i$.
Furthermore, by setting $q = 1$ we recover the replicator dynamics given by the update polices
\begin{equation}
\label{eq:RDUpdatePolicies}
  \policy_{i j}(\action; 1)
    = \action_{i j} \xpar[\big]{
        \payoff_{i}(\pure_{i j}, \action_{-i}) 
      - \payoff_i(\action)
    }
    \qquad \action \in \actions, \,i \in [\noplayers], \,j \in [\nopures_{i}]. \tag{RD}
\end{equation}
Finally, for $q = 2$, one recovers the log-barrier (or inverse update) dynamics given by the update policies 
\begin{equation}
  \policy_{i j}(\action; 2)
    = \action_{i j}^{2} \xpar[\Bigg]{
        \payoff_i(\pure_{i j}, \action_{-i}) 
      - \frac{
        \sum_{k = 1}^{\nopures_{i}} \action_{i k}^{2} \payoff_i(\pure_{i k}, \action_{-i})
      }{
        \sum_{k = 1}^{\nopures_{i}} \action_{i k}^{2}
      }
    }
    \qquad \action \in \actions, \,i \in [\noplayers], \, j \in [\nopures_{i}]. \tag{LBD}
\end{equation}


\subsubsection{Smooth Q-Learning Dynamics} 
\label{sec:QLearningDynamics}
The smooth Q\nobreakdash-learning dynamics for normal-form games, as given by \textcite{tuyls_evolutionary_2006}, are a continuous-time approximation of the Q\nobreakdash-learning algorithm, and are fundamental in economics and artificial intelligence \parencite{camerer_experience-weighted_1999, sutton_reinforcement_2018}.
The Q\nobreakdash-learning dynamics are given by the parametrized update policies
\begin{equation}
\label{eq:QLearningDynamicsUpdatePolicies}
  \policy_{i j}(\action; \explorerate)
    = \action_{i j} \xpar[\Bigg]{
        \payoff_{i}(\pure_{i j}, \action_{-i}) 
      - \payoff_{i}(\action) - \explorerate \xpar[\bigg]{
        \ln \action_{i j} - \sum_{k = 1}^{\nopures_{i}} \action_{i k} \ln \action_{i k}
      }
    }
    \qquad \action \in \actions, \,i \in [\noplayers], \,j \in [\nopures_{i}], \tag{QLD}
\end{equation}
where the parameter~$\explorerate \geq 0$ is called the exploration rate of the players.
Notice that if $\explorerate = 0$, the Q\nobreakdash-learning dynamics are equivalent to the replicator dynamics.
The role of a positive explore rate~$\explorerate$ is to bound the players' rationality, in the sense that, as $\explorerate \to \infty$, the players choose their strategies uniformly at random disregarding the values of their reward functions.

\section{Games with Anonymous Players}
\label{app:PlayerAnonymity}


In this short \lcnamecref{app:PlayerAnonymity}, we show that all atomic congestion games are anonymous games as defined in \eqref{eq:GameAnonymity}.
This is a well-known fact, and the example is meant to help build the proper intuition about this property. 
\begin{example}[An example of normal-form games with anonymous players]
  \label{ex:CongestionGameAnonymity}
    Probably, the most well-known examples of normal-form games with anonymous players are the atomic congestion games.
    In an atomic congestion game~$\game$, the players have identical strategy spaces $\pures_{1} = \dots = \pures_{\noplayers}$ corresponding to possible paths from some source A to some destination B.
    Furthermore, in any pure strategy profile $\pure \in \pures$, each player~$i$ that chooses strategy $\pure_{i} \in \pures_{i}$, exhibits negative reward $\payoff_{i}(\pure_{i}, \pure_{-i})$ that depends only on the total number of players that also choose $\pure_{i}$ (recall that, by definition, $\pures_{1} = \dots = \pures_{\noplayers}$).
    Specifically, for each player~$i$, $\payoff_{i}$ is given by
    \begin{equation}
      \payoff_{i}(\pure_{i}, \pure_{-i})
        = - \sum_{k = 1}^{\noplayers} \one_{\pure_{i} = \pure_{k}}
          \quad \pure \in \pures,
    \end{equation}
    where $\one_{\pure_{i} = \pure_{k}} = 1$ whenever $\pure_{i} = \pure_{k}$, and it's equal to zero otherwise.
  
    Let us verify that~$\game$ satisfies~\eqref{eq:GameAnonymity}.
    Let $\pi \in \perms(\noplayers)$ be some arbitrary permutation function.
    Notice that, since $\pi$ is a permutation function, for each player $k$, by definition, there must exist some player $k'$ such that $\pi(k') = k$.
    Using this observation, we have that, for all players~$i$ and pure strategy profiles~$\pure$, it holds
    \begin{equation}
      \payoff_{\pi(i)}(\pure)
        = - \sum_{k = 1}^{\noplayers} \one_{\pure_{\pi(i)} = \pure_{k}} 
        = - \sum_{k' = 1}^{\noplayers} \one_{\pure_{\pi(i)} = \pure_{\pi(k')}} 
        = - \sum_{k = 1}^{\noplayers} \one_{\pi(\pure)_{i} = \pi(\pure)_{k'}} 
        = \payoff_{i}\xpar[\big]{\pi(\pure)},
    \end{equation}
    where in the second to last equality we used the definition of $\pi(\pure)_{i}$.
    Thus, the atomic congestion game~$\game$ is a game with anonymous players.
  \end{example}

\section{An Example of a \glsentryshort{SIAR} Problem}
\label{app:SIARExample}


In the following example, we show how one can search some unknown update policies~$\policy$ that satisfy the positive correlation side-information constraints given in~\eqref{eq:PositiveCorrelation}.\footnote{
  The conditions of the positive correlation property we use in this example are, first, relaxed as described in \cref{rem:ConstraintsRelaxation}.
}
The goal of this example is to demonstrate how a \gls{SIAR} problem can be constructed in practice, and is meant to aid practitioners in taking advantage of the \gls{SIAR} framework.
In latter \lcnamecrefs{app:SIARExample}, we further show how we can program and solve this problem using the Julia language. 
\begin{example}[An example of a \glsentryshort{SIAR} problem]
\label{ex:SIARProblem}
  Let us consider a game of matching pennies~$\game$ between two players. 
  Their reward functions are given by $\payoff_{1}(\pure) = - \payoff_{2}(\pure) = \payoffmat_{\pure_{1} \pure_{2}}$, for all pure strategy profiles $\pure \in \pures$, where $\payoffmat = \begin{psmallmatrix*}[r] 1 & -1 \\ -1 & 1 \end{psmallmatrix*}$ is the game's payoff matrix.
  Furthermore, let the update policies $\policy \from \actions \to \reals^{2} \times \reals^{2}$ of the players be unknown.
    
  Our goal is to construct a \gls{SIAR} problem to find $\policy$ assuming the game dynamics satisfy the positive correlation property.
  By the above, we have that the vectors of instantaneous payoffs of the players are given, for all strategy profiles~$\action$, by:
  \begin{equation}
    \instantpayoff_{1}(\action) = \begin{pmatrix}
      \action_{2 1} - \action_{2 2} \\
      \action_{2 2} - \action_{2 1}
    \end{pmatrix}
    \quad \text{and} \quad 
    \instantpayoff_{2}(\action) = \begin{pmatrix}
      \action_{1 2} - \action_{1 1} \\
      \action_{1 1} - \action_{1 2}
    \end{pmatrix}.
  \end{equation}
  Substituting the above in~\eqref{eq:PositiveCorrelation} we have that the positive correlation property in $\game$ is given by
  \begin{subequations}
  \begin{alignat}{2}
    (\policy_{1 1}(\action) - \policy_{1 2}(\action)) (\action_{2 1} - \action_{2 2})
      &\geq 0
        &&\qquad \forall \action \in \actions \\
    (\policy_{2 1}(\action) - \policy_{2 2}(\action)) (\action_{1 2} - \action_{1 1}) 
      &\geq 0
        &&\qquad \forall \action \in \actions.
  \end{alignat}
  \end{subequations}
  Next, using the generic form in~\eqref{eq:SIAR} and the above inequalities, we have that the \gls{SIAR} problem in our example is given by
  \begin{equation}
  \begin{aligned} 
    \underset{p_{1}, p_{2}}{\text{minimize}} 
      &&&\sum_{i = 1}^{2} \sum_{k = 1}^{\nodata} \norm{p_{i}\xpar[\big]{\action(t_{k})} - \dot \action_{i}(t_{k})}^{2} \\
    \text{subject to}
      &&&\begin{alignedat}[t]{2}
        &p_{i} \in \polys^{2}[\action]
          &&\qquad i = 1, 2 \\
        &(p_{1 1}(\action) - p_{1 2}(\action)) (\action_{2 1} - \action_{2 2}) \geq 0
          &&\qquad \forall \action \in \actions \\
        &(p_{2 1}(\action) - p_{2 2}(\action)) (\action_{1 2} - \action_{1 1}) \geq 0
          &&\qquad \forall \action \in \actions.
      \end{alignedat}
  \end{aligned} 
  \end{equation}
  Finally, by replacing the nonnegativity constraints with Putinar-type certificates we get the formulation
  \begin{equation}
  \label{eq:SIARProblemRelaxationExample}
  \begin{aligned} 
    \underset{p_{1}, p_{2}, q, \sos}{\text{minimize}} 
      &&&\sum_{i = 1}^{2} \sum_{k = 1}^{\nodata} \norm{p_{i}\xpar[\big]{\action(t_{k})} - \dot \action_{i}(t_{k})}^{2} \\
    \text{subject to}
      &&&\begin{alignedat}[t]{2}
        &p_{i} \in \polys_{d}^{2}[\action], \ q_{i} \in \polys_{d}^{2}[\action], \ \sos_{i} \in \sospolys_{2d}^{5}[\action]
          &&\qquad i = 1, 2 \\
        &(p_{1 1}(\action) - p_{1 2}(\action)) (\action_{2 1} - \action_{2 2}) = g_{1}(\action)
          &&\qquad \forall \action \in \reals^{2} \times \reals^{2} \\
        &(p_{2 1}(\action) - p_{2 2}(\action)) (\action_{1 2} - \action_{1 1}) = g_{2}(\action)
          &&\qquad \forall \action \in \reals^{2} \times \reals^{2},
      \end{alignedat}
  \end{aligned} 
  \end{equation}
  where $g_{i} \from \reals^{2} \times \reals^{2} \to \reals$ is given, for each $i$, by
  \begin{equation}
      g_{i}(\action) 
          = \sos_{i 0}(\action) + \sum_{k = 1}^{2} \sum_{j = 1}^{2} \action_{k j} \cdot \sos_{i k j}(\action) + \sum_{k = 1}^{2} (1 - \action_{k 1} - \action_{k 2}) \cdot q_{i k}(\action),
  \end{equation}
  and $\polys_{d}^{2}[\action]$, $\sospolys_{2d}^5[\action]$ denote, respectively,  the sets of degree~$d$ polynomial vector fields of size~$2$ and degree~$2d$ \gls{SOS} polynomial vector fields of size~$5$ and indeterminates~$\action$.
  For any fixed-degree~$d$, the above formulation is a \gls{SOS} problem, and therefore it can be solved using semidefinite programming.
\end{example}

Having established the above hierarchy of \gls{SOS} problems, let us now see what is their solutions assuming the unknown game dynamics are in fact the replicator dynamics.
In the following example, we construct the exact solution that corresponds to replicator dynamics for the matching pennies game.
\begin{example}[An example of a \glsentryshort{SIAR} solution]
  \label{ex:SIARProblemSolution}
    Continuing our analysis of the game of matching pennies from \cref{ex:SIARProblem}, let us assume that the unknown game dynamics are the replicator dynamics, whose update policies $\policy \from \actions \to \reals^{2} \times \reals^{2}$ are given by~\eqref{eq:RDUpdatePolicies}.
    Substituting in~\eqref{eq:RDUpdatePolicies} the games reward functions, we get that
    \begin{subequations}
    \begin{align}
      \policy_{1 1}(\action)
        &= \action_{1 1} \xpar[\big]{\action_{2 1} - \action_{2 2} - \action_{1 1} (\action_{2 1} - \action_{2 2}) - \action_{1 2} (\action_{2, 2} - \action_{2 1})} \notag \\
        &= \action_{1 1} \xpar[\big]{1 - (\action_{1 1} - \action_{1 2})} (\action_{2 1} - \action_{2 2})
    \intertext{
      for all strategy profiles $\action$.
      Similarly, for all $x$, we have:
    }
      \policy_{1 2}(\action) 
        &= \action_{1 2} \xpar[\big]{1 - (\action_{1 2} - \action_{1 1})} (\action_{2 2} - \action_{2 1}) \\
      \policy_{2 1}(\action) 
        &= \action_{2 1} \xpar[\big]{1 - (\action_{2 1} - \action_{2 2})} (\action_{1 2} - \action_{1 1}) \\
      \policy_{2 2}(\action) 
        &= \action_{2 2} \xpar[\big]{1 - (\action_{2 2} - \action_{2 1})} (\action_{1 2} - \action_{1 2}).
    \end{align}
    \end{subequations}
    Observe that $\policy$ is a polynomial vector field of degree~$3$.
    Now, we can show that, for all points~$\action$ in the ambient space $\reals^{\nopures_{1}} \times \dots \times \reals^{\nopures_{\noplayers}}$, we have
    \begin{subequations}
    \begin{align}
      \inner[\big]{\policy_{1}(\action)}{\instantpayoff_{1}(\action)} 
        &= \action_{1 1} \xpar[\big]{1 - (\action_{1 1} - \action_{1 2})} (\action_{2 1} - \action_{2 2})^{2} + \action_{1 2} \xpar[\big]{1 - (\action_{1 2} - \action_{1 1})} (\action_{2 2} - \action_{2 1})^{2} \notag \\
        &= \action_{1 1} \xpar[\big]{1 - (\action_{1 1} - \action_{1 2})}^{2} (\action_{2 1} - \action_{2 2})^{2} + \action_{1 2} \xpar[\big]{1 - (\action_{1 2} - \action_{1 1})}^{2} (\action_{2 2} - \action_{2 1})^{2} \notag \\
          &\qquad + (1 - \action_{1 1} - \action_{1 2}) (\action_{1 1} - \action_{1 2})^{2} (\action_{2 1} - \action_{2 2})^{2},
    \intertext{and, similarly}
      \inner[\big]{\policy_{2}(\action)}{\instantpayoff_{2}(\action)} 
        &= \action_{2 1} \xpar[\big]{1 - (\action_{2 1} - \action_{2 2})}^{2} (\action_{1 2} - \action_{1 1})^{2} + \action_{2 2} \xpar[\big]{1 - (\action_{2 2} - \action_{2 1})}^{2} (\action_{1 2} - \action_{1 2})^{2} \notag \\
          &\qquad + (1 - \action_{2 1} - \action_{2 2}) (\action_{1 1} - \action_{1 2})^{2} (\action_{2 1} - \action_{2 2})^{2},
    \end{align}
    \end{subequations}
    both of which are $5$\nobreakdash-degree Putinar-type decompositions over $\actions$.
    These equivalences are special cases of a more general result we present in \cref{app:ReplicatorDynamicsProperties}.
      
    By combining the above observations, we get that for $d = 3$ the replicator dynamics is a solution of the \gls{SOS} relaxation of the \gls{SIAR} problem given in~\eqref{eq:SIARProblemRelaxationExample}.
    In particular, one can recover the update policies~$\policy$ of the replicator dynamics, along with the Putinar-type certificates corresponding to positive correlation property, by setting the non-zero unknowns of the problem as follows:
    \begin{subequations}
    \begin{align}
      p_{1 1}(\action) 
        &= \action_{1 1} \xpar[\big]{1 - (\action_{1 1} - \action_{1 2})} (\action_{2 1} - \action_{2 2}) \\
      p_{1 2}(\action) 
        &= \action_{1 2} \xpar[\big]{1 - (\action_{1 2} - \action_{1 1})} (\action_{2 2} - \action_{2 1}) \\
      p_{2 1}(\action) 
        &= \action_{2 1} \xpar[\big]{1 - (\action_{2 1} - \action_{2 2})} (\action_{1 2} - \action_{1 1}) \\
      p_{2 2}(\action) 
        &= \action_{2 2} \xpar[\big]{1 - (\action_{2 2} - \action_{2 1})} (\action_{1 2} - \action_{1 2}) \\
      q_{1 1}(\action)
        &= (\action_{1 1} - \action_{1 2})^{2} (\action_{2 1} - \action_{2 2})^{2} \\
      q_{2 2}(\action)
        &= (\action_{1 1} - \action_{1 2})^{2} (\action_{2 1} - \action_{2 2})^{2} \\
      \sos_{1 1 1}(\action) 
        &= \xpar[\big]{1 - (\action_{1 1} - \action_{1 2})}^{2} (\action_{2 1} - \action_{2 2})^{2} \\
      \sos_{1 1 2}(\action) 
        &= \xpar[\big]{1 - (\action_{1 2} - \action_{1 1})}^{2} (\action_{2 2} - \action_{2 1})^{2} \\
      \sos_{2 2 1}(\action) 
        &= \xpar[\big]{1 - (\action_{2 1} - \action_{2 2})}^{2} (\action_{1 2} - \action_{1 1})^{2} \\
      \sos_{2 2 2}(\action) 
        &= \xpar[\big]{1 - (\action_{2 2} - \action_{2 1})}^{2} (\action_{1 2} - \action_{1 2})^{2}.
    \end{align}
    \end{subequations}
  \end{example}

\section{Proof of \Cref{thm:SideInformationConstraintsMetrics}}
\label{app:ApproximationGuaranteesProofs}


We, now give the proof to \cref{thm:SideInformationConstraintsMetrics} which we restate below for reference.
\SideInformationConstraintsMetrics*
\begin{proof}
  The forward-invariance of the strategy space in~\eqref{eq:StrategySpaceForwardInvariance}, symmetries of the game, e.g., the players' anonymity in~\eqref{eq:UpdatePolicyAnonymity}, and the Nash stationarity in~\eqref{eq:NashStationarity}, are special cases of the side-information constraints analyzed by \textcite[Section 5.2]{ahmadi_learning_2023}, and therefore we refer the interested reader to the corresponding proofs.
  For the rest of the side-information constraints presented in \cref{sec:SideInformationConstraints}, are going to be treated in the remainder of the proof.
    
  Before we continue, to ease the notation, we define $\uncoupleds \eqdef \condiffs\xpar[\big]{\actions_{1} \times \instantpayoff_{1}(\actions)} \times \dots \times \condiffs\xpar[\big]{\actions_{\noplayers} \times \instantpayoff_{\noplayers}(\actions)}$ to be the set of continuously differentiable uncoupled update policies.
  Moreover, in the following formulas, unless otherwise specified, we assume $\action \in \actions$, $i \in [\noplayers]$, and $j \in [\nopures_{i}]$.
  Finally, let us recall that for any functions $h_{1}, h_{2} \condiffs(\actions)$, we have the following inequalities:
  \begin{gather}
    \max_{\action} h_{1}(\action) - \max_{\action} h_{2}(\action)
      \leq \max_{\action} \abs[\big]{h_{1}(\action) - h_{2}(\action)} \label{eq:MaxDiffBound}\\
    \max_{\action} \abs[\big]{h_{1}(\action)} - \max_{\action} \abs[\big]{h_{2}(\action)}
      \leq \max_{\action} \abs[\big]{h_{1}(\action) - h_{2}(\action)} \label{eq:MaxAbsDiffBound},
  \end{gather}
  whose proofs can be found in any standard textbook of real analysis.

  We begin with the existence of an uncoupled formulation that can be imposed by the side-information constraints in~\eqref{eq:UncoupledUpdatePolicy}.
  For this case, we set $L_{S}$ to
  \begin{equation}
    L_{S}(f) 
      = \inf_{f' \in \uncoupleds} \max_{\action, i, j} \abs[\Big]{f_{i j}(\action) - f'_{i j}\xpar[\big]{\action_{i}, \instantpayoff_{i}(\action)}}
        \qquad f \in \condiffs(\actions).
  \end{equation}
  The first condition is, clearly, satisfied by $L_{S}$. 
  In regard to the second condition, let us consider some $\delta > 0$, and two functionals $f, g \in \condiffs(\actions)$.
  Without any loss of the generality, we may also assume that $L_{S}(f) \geq L_{S}(g)$.
  We define the functional $Q \from \condiffs(\actions) \times \uncoupleds \to \nnreals$ given by
  \begin{equation}
    Q(h, h') = \max_{\action, i, j} \abs[\Big]{h_{i j}(\action) - h'_{i j}\xpar[\big]{\action_{i}, \instantpayoff_{i}(\action)}}
      \qquad h \in \condiffs(\actions), \,h' \in \uncoupleds.
  \end{equation}
  Since $\actions$ is a compact set, $\instantpayoff_{1}, \dots, \instantpayoff_{\noplayers}$ are continuous functions (cf.~\eqref{eq:InstantaneousPayoffs}), and $h, h'$ are continuous, it follows that $Q$ is continuous.
  Moreover, since $Q$ is nonnegative, it is bounded from below, and therefore there exists a sequence $h'_{1}, h'_{2}, \dots$ of functionals in $\uncoupleds$ that converge to $\inf_{g' \in \uncoupleds} Q(g, g') = L_{S}(g)$.
  Thus, for~$k$ large enough, we have $Q(g, h'_{k}) - L_{S}(g) \leq \tfrac{\delta}{2}$.
  In the light of the above, we may write
  \begin{subequations}
  \begin{align*}
    \abs[\big]{L_{S}(f) - L_{S}(g)}
      &= L_{S}(f) - L_{S}(g)
        &&\text{as $L_{S}(f) \geq L_{S}(g)$} \\
      &\leq \tfrac{\delta}{2} + L_{S}(f) - Q(g, h'_{k}) \\
      &\leq \tfrac{\delta}{2} + Q(f, h'_{k})- Q(g, h'_{k})
        &&\text{as $L_{S}(f) = \inf_{f' \in \uncoupleds} Q(f, f')$} \\
      &\leq \tfrac{\delta}{2} + \max_{\action, i, j} \abs[\big]{f_{i j}(\action) - g_{i j}(\action)}
        &&\text{by~\eqref{eq:MaxAbsDiffBound}} \\
      &= \tfrac{\delta}{2} + \max_{\action \in \actions} \norm[\big]{f(\action) - g(\action)}_{\infty} \\
      &\leq \tfrac{\delta}{2} + \max_{\action \in \actions} \norm[\big]{f(\action) - g(\action)}
        &&\text{as $\forall y: \norm{y}_{\infty} \leq \norm{y}$}.
  \end{align*}
  \end{subequations}
  Hence, setting $\epsilon = \delta / 2$, and assuming $\max_{\action \in \actions} \norm[\big]{f(\action) - g(\action)} \leq \epsilon$, guarantee that the second condition is satisfied.

  Next, we consider the positive correlation property as given by the side-information constraints in~\eqref{eq:PositiveCorrelation}, and set $L_{S}$ to
  \begin{equation}
      L_{S}(f) 
        = \max_{\action \in \actions} \set[\big]{0, - \inner[\big]{f(\action)}{\instantpayoff(\action)}}
          \qquad f \in \condiffs(\actions),
  \end{equation}
  where for each action profile~$\action$, we write $\instantpayoff(\action)$ to denote the ensemble of the instantaneous payoffs $\instantpayoff_{1}(\action)$, \dots, $\instantpayoff_{\noplayers(\action)}$.
  Once again, first condition is clearly satisfied by $L_{S}$.
  Moreover, for any $\delta > 0$, and any functionals $f, g \in \condiffs(\actions)$ such that, without any loss of the generality, $L_{S}(f) \ge L_S(g)$, we have two cases: either
  \begin{enumerate*}
    \item $L_{S}(\policy) = 0$; or
    \item $L_S(\policy) > 0$.
  \end{enumerate*}
  In the former case, since $L_{S}(f) \ge L_{S}(g) \geq 0$, we have $L_{S}(f) = L_{S}(g) = 0$, and therefore the second condition holds for any $\epsilon > 0$.
  In the latter case, we have that 
  \begin{subequations}
  \begin{align*}
    \abs[\big]{L_{S}(f) - L_{S}(g)}
        &= L_{S}(f) - L_{S}(g)
          &&\text{as $L_{S}(f) \geq L_{S}(g)$} \\
        &= \max_{\action} \xpar[\big]{- \inner[\big]{f(\action)}{\instantpayoff(\action)}} - \max_{\action} \set{0, - \inner[\big]{g(\action)}{\instantpayoff(\action)}} 
          &&\text{as $L_{S}(f) > 0$} \\
        &\leq \max_{\action \in \actions} \xpar[\big]{- \inner[\big]{f(\action)}{\instantpayoff(\action)}} - \max_{\action} \xpar[\big]{- \inner{g(\action)}{\instantpayoff(\action)}} 
          &&\text{as $L_{S}(g) \geq \max_{\action} \xpar[\big]{- \inner{g(\action)}{\instantpayoff(\action)}}$} \\
        &\leq \max_{\action} \abs[\Big]{\inner[\big]{f(\action)}{\instantpayoff(\action)} - \inner[\big]{g(\action)}{\instantpayoff(\action)}} 
          &&\text{by~\eqref{eq:MaxDiffBound}} \\
        &= \max_{\action} \abs[\Big]{\inner{f(\action) - g(\action)}{\instantpayoff(\action)}} \\
        &\leq \max_{\action} \norm{f(\action) - g(\action)} \cdot \norm{\instantpayoff(\action)} 
          &&\text{by \glsxtrshort{CS}} \\ 
        &= \max_{\action} \norm{f(\action) - g(\action)} \cdot \max_{\action} \norm{\instantpayoff(\action)},
  \end{align*}
  \end{subequations}
  Therefore, if we set $\epsilon = \delta / \max_{\action} \norm{\instantpayoff(\action)}$, and assuming $\max_{\action \in \actions} \norm[\big]{f(\action) - g(\action)} \leq \epsilon$, the second condition is satisfied.

  Finally, we remark that, except for minor technical deviations, the proof of the remaining cases of the generalized positive correlation property imposed by~\eqref{eq:GeneralizedPositiveCorrelation} and the property of elimination of strictly dominated strategies imposed by \cref{eq:RegularSelectionDynamicsUpdatePolicy,eq:ConvexMonotonicity} follow similar reasoning as the above, and therefore they are omitted. 
  For reference, we state only the forms of the $L_{S}$ functionals in each of these cases.
  For the former case, given explore rate~$explorerate \geq 0$ constant, we can set $L_{S}$ to
  \begin{equation}
    L_{S}(f) 
      = \max_{\action \in \actions} \set[\big]{0, - \inner[\big]{f(\action)}{\instantpayoff'(\action; \explorerate)}}
        \qquad f \in \condiffs(\actions),
  \end{equation}
  where for each action profile~$\action$, we write $\instantpayoff'(\action; \explorerate)$ to denote the ensemble of the parametrized vector fields $\instantpayoff'_{1}(\action; \explorerate)$, \dots, $\instantpayoff'_{\noplayers(\action; \explorerate)}$ defined as in~\eqref{eq:ParametrizedInstantenuousPayoffs}.
  While, for the latter case, the functional~$L_{S}$ gets the complicated form
  \begin{equation}
    L_{S}(f) 
      = \inf_{g \in \condiffs(\actions)} \max \set*{\begin{alignedat}{1}
        &\max_{\action, i, j} \abs[\big]{f_{i j}(\action) - \action_{i j} \cdot g_{i j}(\action)} \\
        &\max_{\action, i} \abs[\Big]{\inner[\big]{\action_{i}}{g_{i}(\action)}} \\
        &\max_{\mathclap{\substack{\action'_{i} \in \actions_{i} \\ \action, i, j}}} \set[\Big]{
          \set[\big]{0, \inner{\action'_{i}, g_{i}(\action)} - g_{i j}(\action)}
            \given \payoff_{i}(\action'_{i}, \action_{-i}) \geq \payoff_{i}(\pure_{i j}, \action_{-i})
        }
      \end{alignedat}
    }
     \qquad \policy \in \condiffs(\actions).
  \end{equation}
\end{proof}

\section{Properties of Replicator Dynamics as Putinar-Type Decompositions}
\label{app:ReplicatorDynamicsProperties}


In this \lcnamecref{app:ReplicatorDynamicsProperties}, we show that the replicator dynamics, as given in~\eqref{eq:RDUpdatePolicies}, satisfy the side-information constraints presented in~\cref{sec:SideInformationConstraints}.
Furthermore, we show that the nonnegativity conditions of each constraint have a Putinar-type \gls{SOS} decomposition in the set of mixed strategy profiles~$\actions$.

In the sequel, we assume $\game$ to be some arbitrary $\noplayers$\nobreakdash-player normal-form game whose players' behavior is evolving according to the replicator dynamics and is captured by the updated policies $\policy \from \actions \to \reals^{\nopures_{1}} \times \dots \times \reals^{\nopures_{\noplayers}}$.
The fact that $\policy$ satisfies the following properties is well-known in the literature \parencite{sandholm_population_2010}.
However, to the best of our knowledge, no attempt has been made to construct the appropriate Putinar-type decompositions.
The existence of those decompositions certifies that a solution to the \gls{SIAR} problem can recover the replicator dynamics.
In addition to the following properties, replicator dynamics also satisfy the Nash stationarity and are, in fact, uncoupled game dynamics.
Since those two properties are not associated with any nonnegativity constraints (cf. \cref{sec:SideInformationConstraints}), they are excluded by the following analysis.

Now, we state and prove the following \lcnamecrefs{thm:RDForwardInvarianceSOS} about replicator dynamics:

\begin{proposition}[Forward invariance of the strategy space]
\label{thm:RDForwardInvarianceSOS}
  The faces of the strategy space~$\actions$---and thus, the strategy space itself---are forward-invariant with respect to the replicator dynamics.
  Specifically, for all players~$i$ and $\action \in \reals^{\nopures_{1}} \times \dots \times \reals^{\nopures^{\noplayers}}$, it holds
  \begin{subequations}
  \begin{alignat}{2}
    \sum_{j = 1}^{\nopures_{i}} \policy_{i j}(\action)
      &= \xpar[\bigg]{1 - \sum_{j = 1}^{\nopures_{i}}} \cdot p_{i}(\action) \\ 
    \policy_{i j}(\action)
      &= \action_{i j} \cdot q_{i j}(\action)
        &&\quad \forall j \in [\nopures_{i}],
  \end{alignat}
  \end{subequations}
  where $p_{i}, q_{i j}$ are some $\noplayers$\nobreakdash-degree polynomials
\end{proposition}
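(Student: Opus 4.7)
The plan is to work directly from the closed-form of the replicator update policy $\policy_{ij}(\action) = \action_{ij}\bigl(\payoff_i(\pure_{ij},\action_{-i}) - \payoff_i(\action)\bigr)$ and read off the claimed decompositions by elementary polynomial manipulations. Throughout, I would use that the multilinear extension $\payoff_i$ is a polynomial of degree $\noplayers$ in $\action$ and that $\payoff_i(\pure_{ij},\action_{-i})$ is a polynomial of degree $\noplayers-1$ (depending only on $\action_{-i}$). Both identities are supposed to hold in the whole ambient space $\reals^{\nopures_1}\times\dots\times\reals^{\nopures_\noplayers}$, not just on $\actions$, which is what lets us interpret them as bona fide polynomial equalities (and hence as Putinar-type building blocks later).

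For the second identity, I would simply set
\begin{equation*}
  q_{ij}(\action) \eqdef \payoff_i(\pure_{ij},\action_{-i}) - \payoff_i(\action),
\end{equation*}
observe that $q_{ij}\in\polys[\action]$ has degree $\noplayers$, and note that the factorization $\policy_{ij}(\action) = \action_{ij}\cdot q_{ij}(\action)$ is tautological from the definition of the replicator field. This step is essentially free.

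For the first identity, the key step is the polynomial identity
\begin{equation*}
  \payoff_i(\action) = \sum_{j=1}^{\nopures_i} \action_{ij}\,\payoff_i(\pure_{ij},\action_{-i}),
\end{equation*}
which is exactly the multilinear expansion of $\payoff_i$ along the $i$\nobreakdash-th coordinate block and therefore holds as a formal polynomial equality on all of $\reals^{\nopures_1}\times\dots\times\reals^{\nopures_\noplayers}$ (not just on the simplex). Summing the replicator field over $j$ then yields
\begin{equation*}
  \sum_{j=1}^{\nopures_i}\policy_{ij}(\action)
    = \sum_{j=1}^{\nopures_i}\action_{ij}\,\payoff_i(\pure_{ij},\action_{-i}) - \payoff_i(\action)\sum_{j=1}^{\nopures_i}\action_{ij}
    = \payoff_i(\action)\xpar[\bigg]{1 - \sum_{j=1}^{\nopures_i}\action_{ij}},
\end{equation*}
so that $p_i(\action)\eqdef \payoff_i(\action)$ does the job, and has degree $\noplayers$ as required.

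The only subtlety worth flagging is the distinction between the identity holding on the simplex $\actions$ and holding as a polynomial identity in the ambient space; the latter is what the statement requires, since the side-information machinery of \cref{sec:SideInformationConstraints,sec:SIARSolution} treats $\ngcons_\ell$ and the multipliers as polynomials in free indeterminates. I expect this to be the only place where a reader might pause, and the resolution is that the multilinear-expansion identity for $\payoff_i$ above is coefficient-wise, hence globally valid, so no restriction to $\actions$ is invoked anywhere in the argument. Everything else is a two-line computation.
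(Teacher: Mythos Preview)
Your proposal is correct and follows essentially the same route as the paper: set $q_{ij}(\action)=\payoff_i(\pure_{ij},\action_{-i})-\payoff_i(\action)$ for the factorization, and use the multilinear expansion $\payoff_i(\action)=\sum_j \action_{ij}\,\payoff_i(\pure_{ij},\action_{-i})$ to collapse the sum $\sum_j \policy_{ij}(\action)$ to $(1-\sum_j \action_{ij})\,\payoff_i(\action)$, giving $p_i=\payoff_i$. Your extra remark that these are genuine polynomial identities on the ambient space (rather than merely equalities on $\actions$) is a welcome clarification the paper leaves implicit.
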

\begin{proof}
    For the first equality, notice that, for any player~$i$ and $\action \in \reals^{\nopures_{1}} \times \dots \times \reals^{\nopures^{\noplayers}}$, we have
    \begin{equation}
      \sum_{j = 1}^{\nopures_{i}} \policy_{i j}(\action)
        = \sum_{j = 1}^{\nopures_{i}} \action_{i j} \xpar[\big]{\payoff_{i}(\pure_{i j}, \action_{-i}) - \payoff_{i}(\action)}
        = \payoff_{i}(\action) - \payoff_{i}(\action) \sum_{j = 1}^{\nopures_{i}} \action_{i j}
        = \xpar[\bigg]{1 - \sum_{j = 1}^{\nopures_{i}} \action_{i j}} \cdot \payoff_{i}(\action).
    \end{equation}
    Furthermore, the second equality follows trivially by the definition of $\policy_{i, j}$ by setting $q_{i j}(\action) = \payoff_{i}(\pure_{i j}, \action_{-i}) - \payoff_i(\action)$.
    Therefore, setting $p_{i}(\action) = \payoff_{i}(\action)$ and $q_{i j}(\action)$ as above completes the proof.
\end{proof}

\begin{proposition}[Anonymity of the players]
  Given a normal-form game with anonymous players, the replicator dynamics satisfy the player anonymity property.
  Specifically, for any player~$i$, permutation $\pi \from [\noplayers] \to [\noplayers]$, and $\action \in \reals^{\nopures_{1}} \times \dots \times \reals^{\nopures_{\noplayers}}$, we have that
  \begin{equation}
    \policy_{\pi(i)}(\action)
      = \policy_{i}\xpar[\big]{\pi(\action)}
  \end{equation}
  where the mixed strategy profile~$\pi(\action) \in \actions$ is given by $\pi(\action)_{i} \eqdef \action_{\pi(i)}$ for all $i \in [\noplayers]$.
\end{proposition}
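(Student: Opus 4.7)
The plan is to prove this by direct computation: expand both sides using the explicit formula for replicator dynamics~\eqref{eq:RDUpdatePolicies} and then show they coincide by invoking the anonymity of the underlying game~\eqref{eq:GameAnonymity}. Concretely, I will unpack $\policy_{\pi(i),j}(\action)$ as $\action_{\pi(i),j}\bigl(\payoff_{\pi(i)}(\pure_{\pi(i),j},\action_{-\pi(i)}) - \payoff_{\pi(i)}(\action)\bigr)$, and independently unpack $\policy_{i,j}\bigl(\pi(\action)\bigr)$ as $\pi(\action)_{i,j}\bigl(\payoff_i(\pure_{i,j},\pi(\action)_{-i}) - \payoff_i(\pi(\action))\bigr)$, and then match the three factors: the leading probability, the instantaneous payoff, and the expected payoff.

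First, the leading factor is immediate: by the definition $\pi(\action)_i \eqdef \action_{\pi(i)}$, we have $\pi(\action)_{i,j} = \action_{\pi(i),j}$. Second, since anonymity forces identical strategy sets $\pures_1 = \dots = \pures_{\noplayers}$, the symbol $\pure_{i,j}$ can be identified with a common $j$-th pure strategy $\pure_j$, so $\pure_{\pi(i),j} = \pure_{i,j}$. The third step is the main content: I will extend~\eqref{eq:GameAnonymity} from pure to mixed strategy profiles by linearity of the multilinear expectation defining $\payoff_i$, obtaining $\payoff_{\pi(i)}(\action) = \payoff_i(\pi(\action))$ for all $\action$. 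Applying this with $\action$ replaced by the mixed profile $(\pure_j, \action_{-\pi(i)})$ and checking that $\pi$ maps this profile to $(\pure_j, \pi(\action)_{-i})$ (since the permutation leaves the $i$-th coordinate equal to $\pure_j$ and scatters the remaining coordinates exactly into the $-i$ block of $\pi(\action)$) yields the matching of instantaneous payoffs. The expected-payoff term matches by the same extended anonymity identity applied directly to $\action$.

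The hard part is not algebraic but notational: cleanly verifying the identity $\pi\bigl((\pure_j,\action_{-\pi(i)})\bigr) = (\pure_j, \pi(\action)_{-i})$ requires carefully distinguishing the action of $\pi$ on player indices versus on coordinates of a strategy profile, and checking that one does not accidentally apply $\pi$ twice or invert it in the wrong place. Once this index bookkeeping is done, the multilinear extension of~\eqref{eq:GameAnonymity} to $\actions$ is a routine observation (plug in a convex combination of pure profiles and interchange the sums), after which equality of the two sides follows termwise.
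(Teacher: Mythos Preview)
Your proposal is correct and follows essentially the same approach as the paper's proof: both expand $\policy_{\pi(i),j}(\action)$ via~\eqref{eq:RDUpdatePolicies} and then invoke the anonymity identity $\payoff_{\pi(i)}(\cdot) = \payoff_i(\pi(\cdot))$ on the instantaneous and expected payoff terms, together with the trivial match $\pi(\action)_{i,j} = \action_{\pi(i),j}$. If anything, you are more explicit than the paper about extending~\eqref{eq:GameAnonymity} to mixed profiles by multilinearity and about the index bookkeeping behind $\pi\bigl((\pure_j,\action_{-\pi(i)})\bigr) = (\pure_j, \pi(\action)_{-i})$, which the paper leaves implicit in its one-line ``used the fact that the players are anonymous.''
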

\begin{proof}
    As a first step, let us observe that for any permutation $\pi$, player~$i$, mixed action~$\action'_{i} \in \actions_{i}$, and mixed action profile~$\action_{-i} \in \actions_{-i}$, the point $y = \xpar[\big]{\action'_{\pi(i)}, \actions_{-\pi(i)}}$ corresponds to the mixed action profile where player $\pi(i)$ chooses $\action'_{\pi(i)}$ and the rest of the players---possibly, including $i$---choose $\action_{-i}$.
    Therefore, $\pi(y)$ corresponds to the action profile where player~$i$ chooses the action of player~$\pi(i)$ in $y$, i.e., $\pi(y)_{i} = y_{\pi(i)}$.
    Using the above observation, for all permutations~$\pi$, players~$i$, $j \in [\nopures_{i}]$, and $\action \in \reals^{\nopures_{1} \times \dots \times \reals^{\nopures_{\noplayers}}}$, we have that
    \begin{subequations}
    \begin{align}
      \policy_{\pi(i) j}(\action)
        &= \action_{\pi(i) j} \xpar[\big]{\payoff_{\pi(i)}(\pure_{\pi(i) j}, \action_{-\pi(i)}) - \payoff_{\pi(i)}(\action)} \\
        &= \pi(\action)_{i j} \xpar[\Big]{\payoff_{i}\xpar[\big]{\pi(\pure)_{i j}, \pi(\action)_{-i}} - \payoff_{i}\xpar[\big]{\pi(\action)}} \\
        &= \policy_{i j}\xpar[\big]{\pi(\action)},
    \end{align}
  \end{subequations}
    where in the second equality we used the fact that the players are anonymous.
\end{proof}

\begin{proposition}[Positive correlation]
  The replicator dynamics satisfy the positive correlation property.
  Specifically, for all players~$i$ and $\action \in \reals^{\nopures_{1}} \times \dots \times \reals^{\nopures^{\noplayers}}$, it holds
  \begin{equation}
    \inner{\policy_{i}(\action)}{\instantpayoff_{i}\xpar[\big]{\action}} 
      = \sum_{j = 1}^{\nopures_{i}} \action_{i j} \cdot \sos_{i j}(\action) + \xpar[\bigg]{1 - \sum_{j = 1}^{\nopures_{i}} \action_{i j}} \cdot \sos_{i}(\action),
  \end{equation}
  where $\sos_{i j}$ and $\sos_{i}$ are some $2\noplayers$\nobreakdash-degree \gls{SOS} polynomials.
\end{proposition}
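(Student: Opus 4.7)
The plan is to observe that, on the simplex, $\inner{\policy_i(\action)}{\instantpayoff_i(\action)}$ is the variance of the instantaneous-payoff vector with respect to the probability distribution $\action_i$ --- an obviously nonnegative quantity --- and then to lift this identity to the ambient space by introducing a correction term supported on the defining polynomial $1 - \sum_j \action_{ij}$ of the simplex.

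First, I would abbreviate $v_j \eqdef \payoff_i(\pure_{ij}, \action_{-i})$, noting that each $v_j$ is a polynomial of degree at most $\noplayers - 1$ in $\action_{-i}$, and record the polynomial identity $\payoff_i(\action) = \sum_j \action_{ij} v_j$ which remains valid on the whole ambient space $\reals^{\nopures_1} \times \dots \times \reals^{\nopures_\noplayers}$. Unrolling the definition of the replicator update policy then rewrites the inner product as $\sum_j \action_{ij} v_j (v_j - \payoff_i(\action)) = \sum_j \action_{ij} v_j^2 - \payoff_i(\action)^2$, which is the form I will work with.

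Next, I would test the natural \gls{SOS} candidate $\sum_j \action_{ij} (v_j - \payoff_i(\action))^2$; its expansion equals $\sum_j \action_{ij} v_j^2 - 2 \payoff_i(\action)^2 + \payoff_i(\action)^2 \sum_j \action_{ij}$, so subtracting from the previous expression leaves precisely $\payoff_i(\action)^2 \bigl(1 - \sum_j \action_{ij}\bigr)$. This yields the desired Putinar-type decomposition with $\sos_{ij}(\action) = (v_j - \payoff_i(\action))^2$ and $\sos_i(\action) = \payoff_i(\action)^2$, both literal squares and therefore \gls{SOS}; a quick degree count using $\deg v_j \leq \noplayers - 1$ and $\deg \payoff_i = \noplayers$ confirms that both polynomials have degree $2\noplayers$, matching the claim.

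I do not foresee a significant obstacle. The only subtlety is spotting the simplex-variance rewriting that collapses the inner product into a weighted sum of squares, and recognising that the discrepancy between the on-simplex identity and the ambient-space polynomial equality is exactly a scalar multiple of the simplex defining polynomial $1 - \sum_j \action_{ij}$. Once this observation is in hand, the verification reduces to a one-line algebraic manipulation with no additional machinery required.
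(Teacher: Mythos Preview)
Your approach is essentially the same as the paper's: you arrive at the identical decomposition $\sos_{ij}(\action) = \bigl(\payoff_i(\pure_{ij},\action_{-i}) - \payoff_i(\action)\bigr)^2$ and $\sos_i(\action) = \payoff_i(\action)^2$ via the same add-and-subtract manipulation, with the variance interpretation as helpful guiding intuition. The only small omission is the final step the paper includes to justify the first sentence of the proposition: observing that if $\inner{\policy_i(\action)}{\instantpayoff_i(\action)} = 0$ on $\actions$ then each $\action_{ij}\bigl(\payoff_i(\pure_{ij},\action_{-i}) - \payoff_i(\action)\bigr)^2 = 0$, hence $\policy(\action) = 0$, which upgrades nonnegativity to the strict positivity demanded by the positive correlation property.
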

\begin{proof}
  Expanding the left-hand-side of the equality we have, for any $i \in [\noplayers]$ and $\action \in \reals^{\nopures_{1} \times \dots \times \reals^{\nopures_{\noplayers}}}$, that
  \begin{subequations}
  \begin{align}
    \inner{\policy_{i}(\action)}{\instantpayoff_{i}\xpar[\big]{\action}}
      &= \sum_{j = 1}^{\nopures_{i}} \action_{i j} \cdot \xpar[\big]{\payoff_{i}(\pure_{i j}, \action_{-i}) - \payoff_{i}(\action)} \cdot \payoff_{i}(\pure_{i j}, \action_{-i}) \\
      &= \sum_{j = 1}^{\nopures_{i}} \action_{i j} \cdot \payoff_{i}(\pure_{i j}, \action_{-i})^{2} - \payoff_{i}(\action)^{2} \\
      &= \sum_{j = 1}^{\nopures_{i}} \action_{i j} \cdot \payoff_{i}(\pure_{i j}, \action_{-i})^{2} - 2 \cdot \payoff_{i}(\action)^{2} + \payoff_{i}(\action)^{2} \\
      &= \sum_{j = 1}^{\nopures_i} \action_{i j} \cdot \payoff_i(\pure_{i j}, \action_{-i})^{2} - 2 \cdot \payoff_{i}(\action)^{2} + \payoff_{i}(\action)^{2} \sum_{j = 1}^{\nopures_{i}} \action_{i j} + \xpar[\bigg]{1 - \sum_{j = 1}^{\nopures_{i}} \action_{i j}} \cdot \payoff_{i}(\action)^{2} \\
      &= \sum_{j = 1}^{\nopures_i} \action_{i j} \xpar[\big]{\payoff_{i}(\pure_{i j}, \action_{-i})^{2} - 2 \cdot \payoff_{i}(\pure_{i j}, \action_{-i}) \cdot \payoff_{i}(\action) + \payoff_{i}(\action)^{2}} + \xpar[\bigg]{1 - \sum_{j = 1}^{\nopures_{i}} \action_{i j}} \cdot \payoff_{i}(\action)^{2} \\
      &= \sum_{j = 1}^{\nopures_{i}} \action_{i j} \cdot \xpar[\big]{\payoff_{i}(\pure_{i j}, \action_{-i}) - \payoff_i(\action)}^2 + \xpar[\bigg]{1 - \sum_{j = 1}^{\nopures_{i}} \action_{i j}} \cdot \payoff_{i}(\action)^{2}.
  \end{align}
\end{subequations}
  Thus, the claim holds for $\sos_{i j}(\action) = \xpar[\big]{\payoff_{i}(\pure_{i j}, \action_{-i}) - \payoff_{i}(\action)}^{2}$ and $\sos_{i}(\action) = \payoff_{i}(\action)^{2}$.
  Furthermore, for any $\action \in \actions$ such that $\inner{\policy_{i}(\action)}{\instantpayoff_{i}\xpar[\big]{\action}} = 0$, the above implies that for all players~$i$ and $j \in [\nopures_{i}]$, either $\action_{i j} = 0$ or $\payoff_{i}(\pure_{i j}, \action_{-i}) = \payoff_{i}(\action)$.
  Therefore, $\policy(\action) = 0$.
  It follows that $\inner{\policy_{i}(\action)}{\instantpayoff_{i}\xpar[\big]{\action}}$ is strictly positive for any non-stationary state $\action \in \actions$, which completes the proof. 
\end{proof}

\begin{proposition}[Elimination of strictly dominated strategies]
  The replicator dynamics asymptotically lead to the elimination of strictly dominated actions.
  Specifically, for each player~$i$ and $j \in [\nopures_{i}]$ there exist $\noplayers$\nobreakdash-degree polynomials $g_{i j} \from \reals^{\nopures_{1}} \times \dots \times \reals^{\nopures_{\noplayers}} \to \reals$ such that
  \begin{subequations}
  \begin{alignat}{2}
    \sum_{j = 1}^{\nopures_{i}} \action_{i j} \cdot g_{i j}(\action) 
      &= \xpar[\bigg]{1 - \sum_{j = 1}^{\nopures_{i}} \action_{i j}} \cdot p_{i}(\action) \\
    \policy_{i j}(\action) 
      &= \action_{i j} \cdot g_{i j}(\action)
        &&\quad \forall j \in [\nopures_{i}] \\
    \inner{\action'_{i}}{g_{i}(\action)} - g_{i j}(\action) 
      &= \payoff_{i}(\action'_{i}, \action_{-i}) - \payoff_{i}(\pure_{i j}, \action_{-i}) + \xpar[\bigg]{1 - \sum_{k = 1}^{\nopures_{i}} \action_{i k}} \cdot q_{i, j}(\action)
        &&\quad \forall j \in [\nopures_{i}],
  \end{alignat} 
  \end{subequations}
  for all players~$i$, $\action'_{i} \in \reals^{\nopures_i}$, and $\action \in \reals^{\nopures_{1}} \times \dots \times \reals^{\nopures_{\noplayers}}$, where $p_{i}$ and $q_{i j}$ are some $\noplayers$\nobreakdash-degree polynomials.
\end{proposition}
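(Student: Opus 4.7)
The plan is to take the natural candidate
\[
  g_{ij}(x) \;=\; u_i(s_{ij}, x_{-i}) - u_i(x),
\]
which is the per-strategy ``fitness differential'' that drives the replicator dynamics. Since $u_i$ extends multilinearly from pure profiles to the full ambient space $\mathbb{R}^{m_1} \times \dots \times \mathbb{R}^{m_n}$, the function $g_{ij}$ is a polynomial of total degree at most $n$. This choice is essentially forced by condition~(2): the identity $x_{ij}\,g_{ij}(x) = f_{ij}(x)$, combined with the replicator definition~(\ref{eq:RDUpdatePolicies}), determines $g_{ij}$ as a polynomial (since $x_{ij}$ is a non-zero-divisor in $\polys[x]$).

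Conditions~(1) and~(2) then reduce to short algebraic manipulations. Condition~(2) is immediate from the definition of $f_{ij}$. For condition~(1), expand
\[
  \textstyle
  \sum_{j=1}^{m_i} x_{ij}\,g_{ij}(x) \;=\; \sum_{j=1}^{m_i} x_{ij}\,u_i(s_{ij}, x_{-i}) \;-\; u_i(x)\sum_{j=1}^{m_i} x_{ij};
\]
multilinearity of $u_i$ in its $i$-th argument collapses the first sum to $u_i(x)$, and the whole expression equals $\bigl(1-\sum_{j} x_{ij}\bigr)\,u_i(x)$, so we set $p_i(x) = u_i(x)$.

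For condition~(3), multilinearity again gives
\[
  \textstyle
  \inner{x'_i}{g_i(x)} \;=\; \sum_{k} x'_{ik}\bigl(u_i(s_{ik}, x_{-i}) - u_i(x)\bigr) \;=\; u_i(x'_i, x_{-i}) \;-\; u_i(x)\sum_{k} x'_{ik}.
\]
Subtracting $g_{ij}(x) = u_i(s_{ij}, x_{-i}) - u_i(x)$ and regrouping yields
\[
  \textstyle
  \inner{x'_i}{g_i(x)} - g_{ij}(x) \;=\; u_i(x'_i, x_{-i}) - u_i(s_{ij}, x_{-i}) \;+\; u_i(x)\,\bigl(1 - \sum_{k} x'_{ik}\bigr),
\]
so we take $q_{ij}(x) = u_i(x)$ and identify the trailing factor with the equality-constraint polynomial defining $\actions_i$ (whose vanishing on the simplex is precisely what makes the resulting decomposition a valid Putinar-type certificate for convex monotonicity on $\actions_i \times \actions$).

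There is no substantial obstacle: every step is a polynomial expansion, and the only real ingredient is the multilinearity of $u_i$ in $(x_1, \dots, x_n)$. The degree bounds asserted in the statement ($g_{ij}$, $p_i$, $q_{ij}$ all of degree at most $n$) then follow automatically, since each is built from at most one product over all $n$ players.
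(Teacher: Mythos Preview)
Your proof is correct and follows essentially the same route as the paper: both set $g_{ij}(x) = u_i(s_{ij}, x_{-i}) - u_i(x)$ and $p_i = q_{ij} = u_i$, then verify the three identities by direct expansion using the multilinearity of $u_i$. One small remark: your computation correctly yields the multiplier $\bigl(1 - \sum_k x'_{ik}\bigr)$ in the third identity, whereas the paper writes $\bigl(1 - \sum_k x_{ik}\bigr)$; your version is the right one, since the extra $u_i(x)$ term comes from $\sum_k x'_{ik}\,u_i(x)$, and it is indeed the equality constraint for $x'_i \in \actions_i$ that appears in the Putinar certificate over $\actions_i \times \actions$.
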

\begin{proof}
  By setting $g_{i j}(\action) = \payoff_{i}(\pure_{i j}, \action_{-i}) - \payoff_{i}(\action)$ and $p_{i}(\action) = \payoff_{i}(\action)$, the first two equalities hold as a consequence of~\cref{thm:RDForwardInvarianceSOS}.
  Substituting the above in the last equality we also have, for any player~$i$, $j \in [\nopures_{i}]$, $\action'_{i} \in \reals^{\nopures_{i}}$ and $\action \in \reals^{\nopures_{1}} \times \dots \times \reals^{\nopures_{\noplayers}}$, that
  \begin{subequations}
  \begin{align}
    \inner{\action'_{i}}{g_{i}(\action)} - g_{i j}(\action)
      &= \sum_{k = 1}^{\nopures_{i}} \action'_{i k} \xpar[\big]{\payoff_{i}(\pure_{i k}, \action_{-i}) - \payoff_{i}(\action)} - \xpar[\big]{\payoff_{i}(\pure_{i j}, \action_{-i}) - \payoff_{i}(\action)} \\
      &= \payoff_{i}(\action'_{i}, \action_{-i}) - \payoff_{i}(\pure_{i j}, \action_{-i}) + \xpar[\bigg]{1 - \sum_{k = 1}^{\nopures_{i}} \action_{i k}} \cdot \payoff_{i}(\action).
  \end{align}
  \end{subequations}
  Therefore, the last equality holds by additionally setting $q_{i j}(\action) = \payoff_{i}(\action)$.
  Furthermore, for any $\pure_{i j} \in \pures_{i}$ that is strictly dominated by some mixed action profile $\action \in \actions$, the last expression is strictly positive, which implies that $\inner{\action'_{i}, g_i(\action)} - g_{i j}(\action) > 0$ and, thus, the proof is complete.
\end{proof}

\section{Additional Figures \& Tables}
\label{app:AdditonalFiguresAndTables}


We evaluate the \gls{SIAR} framework's average performance on this task using a sample of $100$ \gls{SIAR} solution trained to different random trajectories,
The total number of data~$\nodata = 5$, and the duration of the short-run~$\Delta T = 0.25$ remain fixed for each \gls{SIAR} problem.
Subsequently, each solution is evaluated on $1000$ random trajectories which are uniformly initialized in strategy space.
An evaluation is considered successful if the strategy profile at time~$t = 200$ lies $\epsilon = 0.01$ close to the Nash equilibrium (\gls{QRE}, in the case of the Q\nobreakdash-learning dynamics) that the unknown game dynamics converge to for the same initialization.
The parameters $t$ and $\epsilon$ remain constant across all evaluations.
Additionally, in the case that the unknown game dynamics is the smooth Q\nobreakdash-learning dynamics, we do not assume knowledge of their exploration rate~$\explorerate = 0.2$.
Instead, we search for an exploration rate~$\explorerate \in [0, 1]$ that best fits the observed data using binary search.
Overall, the \gls{SIAR} solutions exhibit remarkable accuracy as the degree~$d$ of the polynomial solutions becomes larger even in the case the unknown game dynamics are non-polynomial, e.g., the log-barrier dynamics and Q\nobreakdash-learning dynamics.
The performance of the \gls{SIAR} solutions in two-strategy congestion games with different number of players~$\noplayers \in \set{2, 4, \dots, 10}$ is summarized in \cref{tbl:SIARPerformanceEquilbriumSelectionProlem}.
\begin{table}[hb!]
  \centering
      \caption{%
          Average accuracy (\%) of the \gls{SIAR} solution of degree~$d$ in the equilibrium selection problem for $\noplayers$\protect\nobreakdash-player two-strategy congestion games with $\binom{\noplayers}{\noplayers / 2}$ of Nash equilibria; the accuracy is rounded to one decimal place.
          The \gls{SIAR} solutions capture the limiting behavior of the game dynamics even if the update policies are non-polynomial.
      }
      \sisetup{table-format=3.1}
      \begin{tabular}{@{}S[table-format=2]SSSSSSS@{}} 
      \toprule
            \multicolumn{1}{c|}{}
          & \multicolumn{3}{|c|}{RD}
          & \multicolumn{2}{|c|}{LBD}
          & \multicolumn{2}{|c}{QLD} \\ 
            \multicolumn{1}{c|}{$\noplayers$}
          & \multicolumn{1}{|c}{$d = 3$}
          & \multicolumn{1}{c}{$d = 4$}
          & \multicolumn{1}{c|}{$d = 5$}
          & \multicolumn{1}{|c}{$d = 4$}
          & \multicolumn{1}{c|}{$d = 5$}
          & \multicolumn{1}{|c}{$d = 4$}
          & \multicolumn{1}{c}{$d = 5$} \\
      \midrule
           2 & 100 & 100 &  99.99 &  99.99 &  99.99 &  96.12 &  99.99 \\   
           4 & 100 & 100 & 100    & 100    & 100    &  99.82 & 100    \\   
           6 & 100 & 100 & 100    & 100    & 100    & 100    & 100    \\   
           8 & 100 & 100 & 100    & 100    & 100    & 100    & 100    \\   
          10 & 100 & 100 & 100    &  99.98 &  99.97 & 100    & 100    \\  
      \bottomrule
      \end{tabular}
      \label{tbl:SIARPerformanceEquilbriumSelectionProlem}
  \end{table} 

\printbibliography

\end{document}